\newcommand{\defeq}{\vcentcolon=}
\def\opm{\bar{P}}
\def\secBest#1{\textbf{\textcolor{Gray}{#1}}}
\DeclareMathOperator*{\argmax}{arg\,max}
\DeclareMathOperator{\pre}{pre}
\DeclareMathOperator{\cov}{cov}
\newtheorem{proposition}{Proposition}
\theoremstyle{remark}
\newtheorem{remark}{Remark}
\begin{document}

\title{Improving Active Learning with a Bayesian Representation of Epistemic Uncertainty}

\author[1]{Jake Thomas}
\author[2]{Jeremie Houssineau}
\affil[1]{Warwick Mathematics Institute, University of Warwick, UK.}
\affil[2]{Division of Mathematical Sciences, Nanyang Technological University, Singapore.}

\date{}

\maketitle

\begin{abstract}
A popular strategy for active learning is to specifically target a reduction in epistemic uncertainty, since aleatoric uncertainty is often considered as being intrinsic to the system of interest and therefore not reducible. Yet, distinguishing these two types of uncertainty remains challenging and there is no single strategy that consistently outperforms the others. We propose to use a particular combination of probability and possibility theories, with the aim of using the latter to specifically represent epistemic uncertainty, and we show how this combination leads to new active learning strategies that have desirable properties. In order to demonstrate the efficiency of these strategies in non-trivial settings, we introduce the notion of a possibilistic Gaussian process (GP) and consider GP-based multiclass and binary classification problems, for which the proposed methods display a strong performance for both simulated and real datasets.
\end{abstract}

\section{Introduction}

Active learning is a supervised learning problem in which the learner chooses which data points to learn from, with the goal of minimising the number of data points required to obtain an efficient model. In the sequential problem, at each time step, the learner is presented with a set of unlabelled points. Using the previously observed data, the learner must determine the most informative unlabelled point to query. Active learning strategies often aim to reduce epistemic uncertainty (EU), which is due to a lack of knowledge, rather than aleatoric uncertainty (AU), which is a fundamental aspect of the system that cannot be minimised. Distinguishing between EU and AU is however a significant challenge with the standard approach where both are modelled with probability distributions. The difficulties arising when using probability theory to model EU are highlighted, e.g., by the paradoxes \citep{dawid1973marginalization} appearing when using improper priors to model the absence of information, or by the fact that the standard notions of entropy and mutual information do not satisfy certain desirable properties \citep{wimmer2023quantifying}. In this work, we propose to leverage another representation of uncertainty with the aim of better capturing EU.

Alternatives to the standard inference frameworks have often been studied under different forms since Fisher's own attempt in the early 20th century \citep{fisher1935fiducial}. Possibility theory is one such approach which was first studied in the 70s \citep{zadeh1978fuzzy}. Possibility theory assumes various forms, and we consider one \citep{de2001integration} that closely resembles probability theory to facilitate the reformulation of the statistical models on which we will apply active learning. We consider in particular the formalism introduced by \cite{houssineau2018parameter} as it allows for leveraging a possibilistic analogue of Bayesian inference which preserves many of the properties of the standard approach while enabling a mixed probabilistic/possibilistic modelling. 

In Bayesian active learning \citep{houlsby2011bayesian, rodrigues2014gaussian}, the inference is often carried out by using Gaussian processes (GPs) \citep{williams2006gaussian} as they provide a versatile approach which yields a full posterior distribution of model uncertainty across input space, hence allowing for easy interpretability and use of the model. However, what is usually modelled by a GP is exactly the part that will be replaced by possibility functions in this work, and there is no possibilistic version of the notion of GP in the literature. We therefore introduce such a concept and show that it behaves similarly to the standard notion of GP.

After discussing some related work in Section~\ref{sec:relatedWork}, we outline relevant aspects of possibility theory in Section~\ref{sec:possibilityTheory} and introduce two novel strategies for active learning in Section~\ref{sec:activeLearningWithPossibilityTheory}, along with a discussion of their properties. In Section~\ref{sec:possibilisticGaussianProcesses}, we propose possibilistic analogues of the concepts of precision, covariance, and GP, and prove that these definitions preserve many of the useful properties of their probabilistic equivalents. We also highlight the key differences from the probabilistic case, including important distinctions around the roles of covariance and precision functions. Equipped with all the required tools, we assess the performance of the proposed active learning strategies in Section~\ref{sec:experimentalResults}.

Our contributions are as follows:
\begin{enumerate}[nosep]
    \item We propose two new acquisition functions for Bayesian active learning, including a new notion of epistemic uncertainty for which advantageous properties are highlighted
    \item Based on new results on the behaviour of the covariance in possibility theory,  we introduce a novel notion of possibilistic GP, which enables our acquisition functions to be implemented for non-trivial GP classification problems
    \item We show that the proposed approach displays a strong performance on both simulated and real datasets, with at least one of our acquisition functions outperforming the baselines in each case.
\end{enumerate}

\section{Related Work}
\label{sec:relatedWork}

This paper builds on existing work in information-theoretic active learning, focusing on varying measures of uncertainty over models and acquisition functions. The standard measure of uncertainty in the probabilistic context is the entropy over model parameters $\theta$ given labeled data $\mathcal{D}$, as used in several approaches \citep{sebastiani2000maximum, mackay1992information}. The corresponding greedy acquisition function selects the unlabelled input $x$ that maximally decreases the expected entropy over the parameters, i.e., $\arg\max_x H(\theta \vert \mathcal{D}) - \mathbb{E}_{y \sim p(y \vert x, \mathcal{D})}\left[ H(\theta \vert y, x, \mathcal{D}) \right]$. This requires knowledge of the posterior model parameters and is commonly rearranged to the BALD (Bayesian Active Learning by Disagreement) objective via the joint conditional mutual information \citep{houlsby2011bayesian}, focusing on the difference in entropy in label space, $\arg\max_x H(y \vert x, \mathcal{D}) - \mathbb{E}_{\theta \sim p(\theta \vert x, \mathcal{D})}\left[ H(y \vert x, \theta) \right]$. Despite their ubiquity, the performance of these active learning methods is not always consistent, and neither of them clearly outperforms standard baselines.

Recent studies have shown that possibility theory can be applied to complex uncertainty representation and inference problems. For instance, \cite{houssineau2018smoothing} introduced a possibilistic analogue of the Kalman filter, and \cite{houssineau2021linear} defined a possibilistic version of spatial point processes. These results emphasise the feasibility of reformulating GPs and introducing new active learning strategies in the context of possibility theory. Related to this work, \cite{hullermeier2021aleatoric} provide a comprehensive review of AU and EU in machine learning, supporting the motivation for our mixed possibilistic-probabilistic approach. \cite{wimmer2023quantifying} define a list of desirable properties that should be satisfied by notions of total uncertainty, AU, and EU, and show that the usual entropy and mutual information do not satisfy some of these properties. This list of properties is extended by \cite{sale2023second} who propose a family of solutions based on distances between a given second-order probability distribution and instances of such distributions identified as having no EU, AU or total uncertainty. \cite{denoeux2023reasoning} explores reasoning with fuzzy and uncertain evidence using epistemic random fuzzy sets, which aligns with our approach of combining possibility and probability theories.
\cite{caprio2023novel} introduces a generalisation of Bayes' theorem when both the likelihood and prior are only known to be contained within a given class of probability distributions, which could lead to generalisations of our approach where the likelihood is assumed to be known.
\cite{nguyen2019epistemic,nguyen2022measure} propose EU sampling in active learning, using the likelihood ratio and degrees of support for classification. Our approach contrasts by allowing prior information about the model to be taken into account and by generalising to multi-class problems.

More recently, several active learning methods have been proposed based on different principles: Mean Objective Cost of Uncertainty (MOCU) \citep{zhao2021efficient, zhao2021uncertainty} aims to reduce the expected loss, Bayesian Estimate of Mean Proper Scores (BEMPS) \citep{tan2021diversity,tan2023bayesian} aims to increase a score, and Expected Predictive Information Gain (EPIG) \citep{smith2023prediction} aims to increase the information gain. The acquisition function of all these methods includes an expected value over inputs, which means that they target the global improvement brought by each new label, rather than focusing on the local picture like most traditional active learning techniques; we refer to these methods as ``global'', as opposed to standard acquisition functions, which are ``local''. In addition, these global methods must first update the global model with an expected observation at the current input, as opposed to standard acquisition functions, which are ``update-free''. In this work, our main objective is to show that possibility theory allows to simply and efficiently capture EU, and to provide evidence for this via active learning. In particular, we propose two local and update-free acquisition functions and show that these outperform acquisition functions with the same properties.

\section{Possibility Theory and OPMs}
\label{sec:possibilityTheory}

\subsection{Introduction and properties}

Possibility theory and probability theory can be combined to provide a method of modelling both AU and EU \citep{houssineau2018parameter}. By relaxing the additive condition of probability theory when describing EU, we obtain outer measures which we refer to as ``outer \emph{probability} measures'' (OPMs) for which standard operations like conditioning or marginalisation can be defined.

To illustrate the core ideas of possibility theory, we consider the problem of modelling the uncertainty on an unknown but otherwise fixed parameter, $\theta^\star$, using the analogue of a random variable known as an uncertain variable. Uncertain variables (denoted by bold symbols) are functions, $\bm{\theta}: \Omega \to \Theta$ from a sample space  $\Omega$ which contains all possible states of nature to the parameter space $\Theta$ of possible values for $\theta^\star$. Note that there exists one true state of nature $\omega^\star \in \Omega$ such that $\bm{\theta}(\omega^\star) = \theta^\star$. In this case, since the uncertainty is entirely epistemic, the corresponding OPMs are of the form $\opm(A) = \sup_{\theta \in A} f_{\bm{\theta}}(\theta)$ for any $A \subseteq \Theta$, where $f_{\bm{\theta}}:\Theta \to [0,1]$ is a function such that $\sup_{\theta \in \Theta} f_{\bm{\theta}}(\theta) = 1$ which we refer to as a ``possibility function''. As in probability theory, it will often be easier to handle possibility functions directly rather than the corresponding OPMs.

Possibility theory allows us to describe the information we have about $\theta^\star$. Hence, we use the notion of \emph{credibility} rather than probability when talking about events, e.g., $\opm(A)$ is the credibility of the event $\bm{\theta} \in A$. A credibility of 1 indicates that there is no information to suggest an event is not possible, and a credibility of 0 is used when an event is not possible given the current information. A probability distribution \emph{characterises} a random variable, but a possibility $f_{\bm{\theta}}$ only \emph{describes} an uncertain variable $\bm{\theta}$. This is because many other possibility functions could describe the same uncertain variable, reflecting different levels of information. A review of the standard approach to possibility theory is given by \cite{dubois2015possibility}, where different notions of independence and conditioning are routinely studied \citep{dubois2000possibility}. We consider the context of \emph{numerical} possibility theory \citep{de2001integration} which follows rules that are analogous to the ones used in probability theory.

A notion of expected value has to be defined for uncertain variables, and we consider one that corresponds to the mode, as motivated by \cite{houssineau2019elements}, defined as $\mathbb{E}^\star(\bm{\theta}) = \argmax_{\theta \in \Theta} f_{\bm{\theta}}(\theta)$,
which can be set-valued. In particular, it holds that $\mathbb{E}^\star(T(\bm{\theta})) = T(\mathbb{E}^\star(\bm{\theta}))$ for any map $T$ on $\Theta$, similarly to maximum likelihood estimates. More generally, $T(\bm{\theta})$ is described via the change of variable formula by
$f_{T(\bm{\theta})}(\theta') = \sup_{\theta : T(\theta) = \theta'} f_{\bm{\theta}}(\theta)$, with the assumption that $\sup \emptyset = 0$ for cases where there is no $\theta$ such that $T(\theta) = \theta'$.

Given uncertain variables $\bm{\theta}$ and $\bm{\varphi}$ with respective sets of possible values $\Theta$ and $\Phi$, we define a joint possibility function as $f_{\bm{\theta}\bm{\varphi}}:\Theta \times \Phi \to [0,1]$ such that $\sup_{\theta \in \Theta, \varphi \in \Phi} f_{\bm{\theta}\bm{\varphi}}(\theta,\varphi) = 1$. The corresponding marginal possibility function describing $\bm{\theta}$ is $f_{\bm{\theta}}(\theta)= \sup_{\varphi \in \Phi} f_{\bm{\theta} \bm{\varphi} }(\theta,\varphi)$. We say that $\bm{\theta}$ and $\bm{\varphi}$ are independent if there exist possibility functions $f_{\bm{\theta}}$ and $f_{\bm{\varphi}}$ on $\Theta$ and $\Phi$ respectively such that $f_{\bm{\theta} \bm{\varphi} }(\theta,\varphi)=f_{\bm{\theta}}(\theta)f_{\bm{\varphi}}(\varphi)$ for all $\theta \in \Theta , \varphi \in \Phi$. The conditional distribution of $\bm{\theta}$ given $\bm{\varphi}$ is $f_{\bm{\theta}\vert \bm{\varphi}}(\theta\vert \varphi) = f_{\bm{\theta} \bm{\varphi} }(\theta,\varphi) / f_{\bm{\varphi} }(\varphi)$.

\subsection{Gaussian Possibility Function}

The multivariate Gaussian possibility function with parameters $\mu \in \mathbb{R}^n$ and $\Sigma \in \mathbb{R}^n\times \mathbb{R}^n$ is defined as $\overline{\mathrm{N}}(\theta;\mu,\Sigma) = \exp{\left(-\frac{1}{2}(\theta-\mu)^\top \Sigma^{-1}(\theta-\mu)\right)}$, for any $\theta \in \mathbb{R}^n$, with $\Sigma$ positive definite. This is simply a re-normalised version of the probabilistic Gaussian, yet it is as fundamental within possibility theory as the Gaussian distribution is within probability theory. In particular, it appears in asymptotic results such as the central limit theorem for uncertain variables \citep{houssineau2019elements}.

Consider an uncertain variable $\bm{\theta}$ described by the normal possibility function $\overline{\mathrm{N}}(\mu,\Sigma)$. Decomposing the uncertain variable $\bm{\theta}$ into $\bm{\theta}= (\bm{\theta}_1, \bm{\theta}_2)^{\top}$, with the corresponding mean $\mu = (\mu_1,\mu_2)^{\top}$ and with $\Sigma_{ij}$, $i,j \in \{1,2\}$, the corresponding sub-matrices of $\Sigma$, the conditional possibility function describing $\bm{\theta}_1$ given $\bm{\theta}_2 =\theta_2$ is $\overline{\mathrm{N}}(\theta;\mu_{1|2},\Sigma_{1|2})$ with $\mu_{1|2} = \ \mathbb{E}^\star(\bm{\theta}_1\vert \bm{\theta}_2=\theta_2) = \mu_1 + \Sigma_{21}\Sigma_{11}^{-1}(\theta_2 -\mu_2)$ and  with $\Sigma_{1|2} = \Sigma_{11} - \Sigma_{12}\Sigma_{22}^{-1}\Sigma_{21}$. Most of the standard results regarding Gaussian distributions continue to hold for Gaussian possibility functions, e.g., the marginal possibility function describing a subset of components of $\bm{\theta}$ is still Gaussian. Some of the advantages of the Gaussian possibility function are that
\begin{enumerate*}[label=\roman*)]
    \item it can be defined on a bounded subset $S$ of $\mathbb{R}^d$ without re-normalisation as long as $\mu$ belongs to $S$, and
    \item when parameterising it by the precision matrix $\Gamma = \Sigma^{-1}$, it allows for $\Gamma$ to be positive semi-definite rather than positive definite, in particular, one can set $\Gamma = 0$ when there is complete absence of information.
\end{enumerate*}

\subsection{Inference with OPMs}

Using the concept of OPM, one can also model the uncertainty about quantities for which EU and AU coexist. For instance, if we consider a random variable $Y$ on a set $\mathsf{Y}$, and if the true parameter $\theta^{\star}$ of the parameterised probability distribution $p_{\theta}$ of $Y$ is unknown but deterministic, then one can model the information about $\theta^{\star}$ via a possibility function $f_{\bm{\theta}}$. The OPM jointly describing the uncertainty about $\bm{\theta}$ and $Y$ is
\begin{equation} \label{OPM}
\opm(A \times B) = \sup_{\theta \in A} f_{\bm{\theta}}(\theta) \int_B p_{\theta}(y) \mathrm{d}y,
\end{equation}
for any $A \subseteq \Theta$ and any (measurable) subset $B$ of $\mathsf{Y}$. Henceforth, it will be convenient to slightly abuse notations and view $p_{\theta}$ as a conditional probability distribution $p_Y(\cdot \mid \theta)$. If we observe a realisation $y$ of $Y$, then we can show that the information about $\bm{\theta}$ given $Y=y$ is also a possibility function of the form
\[
f_{\bm{\theta}|Y}(\theta \mid y) = \dfrac{p_Y(y \mid \theta) f_{\bm{\theta}}(\theta)}{\sup_{\theta' \in \Theta} p_Y(y \mid \theta') f_{\bm{\theta}}(\theta')}.
\]
We will borrow from the Bayesian nomenclature and, e.g., refer to both $f_{\bm{\theta}\vert \bm{\varphi}}$ and $f_{\bm{\theta}|Y}$ as \emph{posterior} possibility functions.

\section{Active Learning with Possibility Theory}
\label{sec:activeLearningWithPossibilityTheory}

In information-theoretic active learning, the objective is to choose the unlabelled inputs (or set of unlabeled inputs) which produces the largest decrease in the uncertainty, i.e., to maximise the information gain. Therefore, a measure of uncertainty is required. The standard measure of uncertainty in the probabilistic context is the Shannon entropy \citep{sebastiani2000maximum, houlsby2011bayesian}. However, the typical notion of Shannon entropy, and its additive decomposition into conditional entropy and mutual information as measures of AU and EU respectively, fails to satisfy several natural properties which we would expect from such measures. We consider the list of properties that should be satisfied by notions of EU, AU and total uncertainty, as proposed by \cite{wimmer2023quantifying}; we focus on EU and, denoting $\mathrm{EU}(f_{\bm{\theta}})$ a given notion of EU based on any possibility function $f_{\bm{\theta}}$, adapt the corresponding properties to our framework as follows:
\begin{enumerate}[label=A\arabic*, start=0, nosep]
  \item $\mathrm{EU}(f_{\bm{\theta}})$ is non-negative.
  \item $\mathrm{EU}(f_{\bm{\theta}})$ vanishes if $f_{\bm{\theta}}$ is the indicator of a point
  \item $\mathrm{EU}(f_{\bm{\theta}})$ is maximal if $f_{\bm{\theta}}$ is the uninformative possibility function (i.e., $f_{\bm{\theta}}(\theta) = 1$, $\forall \theta \in \Theta$)
  \item If $T : \Theta \to \Theta$ is a bijective mapping such that, for some $\alpha \in (0,1]$ and some given probability distribution $p$ on $\mathsf{Y}$, it holds that $p_Y(y \vert T(\theta)) = \alpha p_Y(y \vert \theta) + (1 - \alpha) p(y)$ for all $y \in \mathsf{Y}$, then $\mathrm{EU}(f_{T(\bm{\theta})}) = \alpha \mathrm{EU}(f_{\bm{\theta}})$
  \item If $f'_{\bm{\theta}}$ is strictly less informative than $f_{\bm{\theta}}$,\footnote{i.e., there exists a region $A \subset \Theta$ of positive volume such that $f'_{\bm{\theta}}(\theta) > f_{\bm{\theta}}(\theta)$ for any $\theta \in A$} then $\mathrm{EU}(f'_{\bm{\theta}}) \geq \mathrm{EU}(f_{\bm{\theta}})$ (weak version) or $\mathrm{EU}(f'_{\bm{\theta}}) > \mathrm{EU}(f_{\bm{\theta}})$ (strict version)
\end{enumerate}
Property A3 is the only one that has been significantly modified when compared to \cite{wimmer2023quantifying}, who focuses on the multinomial setting, i.e., where $\Theta$ is a simplex and where $p_Y(y \vert \theta) = \theta_y$. This is further motivated in Remark~\ref{rem:PropA3} in Appendix~\ref{proofs:AL}.

In this section, we introduce two strategies for active learning. The first one is based on a novel measure of EU while the second one is based on the notion of \emph{necessity}, which is a consequence of the sub-additivity of OPMs. We study which of the properties above are satisfied by the proposed strategies, and apply them to active learning. The proofs for the results in this section can be found in Appendix~\ref{proofs:AL}.

\subsection{Measures of Epistemic Uncertainty}

In possibility theory, our current knowledge is encoded within the OPM, $\opm$, which maps subsets of $\Theta$ and measurable subsets of $\mathsf{Y}$ into [0,1]. Taking the notation from \eqref{OPM}, we can think of the \emph{credibility} of an event, $\bm{\theta} = \theta'$ or $Y \in B$, as the value assigned to that event by the respective marginal
\begin{align*}
    \opm ( \{\theta'\} \times \mathsf{Y}) & = f_{\bm{\theta}}(\theta') \int p_Y( y \vert \theta) \mathrm{d}y =f_{\bm{\theta}}(\theta'),\\
    \opm (\Theta\times  B) & = \sup_{\theta \in \Theta} f_{\bm{\theta}}(\theta) \int_B p_Y(y \vert \theta) \mathrm{d} y \doteq \opm_Y(B).
\end{align*}
The marginal in $\Theta$, which simplifies to the possibility function $f_{\bm{\theta}}$, describes the credibility of $\bm{\theta} = \theta'$ given the available information. On the other hand, the marginal in $Y$ can be interpreted as a notion of maximum expected value of the probability of $Y \in B$ given the available information, which can be seen as the total mass assigned to the event $Y \in B$ by the marginal OPM $\opm_Y$.

To measure the uncertainty encoded by the OPM $\opm$ we can integrate the marginals $f_{\bm{\theta}}$ and $\opm_Y$ across their respective spaces, whenever such integrals are well defined. The first integral $U^{\Theta}_{f_{\bm{\theta}}} = \int_\Theta f_{\bm{\theta}}(\theta) \mathrm{d}\theta$ we call the EU on the parameter space as it directly captures the uncertainty modelled by the possibility function on $\Theta$. We subscript only the possibility function $f_{\bm{\theta}}$ in $U^{\Theta}_{f_{\bm{\theta}}}$, as this measure of EU does not depend on the other term in the OPM $\opm$. A similar approach has been used for control under uncertainty by \cite{chen2021observer}. The second integral is a novel notion of EU on $\mathsf{Y}$ defined as
\begin{equation*}
U^{\mathsf{Y}}_{\opm} = \int_{\mathsf{Y}} \sup_{\theta \in \Theta} \Big( f_{\bm{\theta}}(\theta) p_Y(y \vert \theta) \Big) \mathrm{d}y - 1.
\end{equation*}
In this case, the EU modelled by $f_{\bm{\theta}}$ is mapped to $\mathsf{Y}$ via $p_Y(\cdot \vert \theta)$ before being integrated out. Although $U^{\mathsf{Y}}_{\opm}$ is based on the conditional probability $p_Y(\cdot \vert \theta)$, it is not a measure of total uncertainty or AU since it vanishes when the true parameter $\theta^\star$ is known, as will be demonstrated in the following proposition. We will write $U^{\Theta}$ and $U^{\mathsf{Y}}$ when there is no ambiguity about the underlying OPM.

\begin{proposition}
\label{prop:propertiesMeasuresUncertainty}
The measure of uncertainty $U^{\Theta}$ satisfies Properties A0-A2, Property A3 in the multinomial setting, as well as the strict version of Property A4, and $U^{\mathsf{Y}}$ satisfies Properties A0-A3 as well as the weak version of Property A4.
\end{proposition}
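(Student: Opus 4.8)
The plan is to verify each of the listed properties (A0--A4) separately for the two functionals $U^{\Theta}$ and $U^{\mathsf{Y}}$, treating them as two parallel checklists. Throughout, I would use the defining constraint $\sup_{\theta} f_{\bm{\theta}}(\theta) = 1$ and, for the $\mathsf{Y}$-side, the fact that each $p_Y(\cdot \mid \theta)$ is a genuine probability density, so $\int_{\mathsf{Y}} p_Y(y \mid \theta)\,\mathrm{d}y = 1$. The two easy cases are A0 and A1. For $U^{\Theta}$, non-negativity (A0) is immediate since $f_{\bm{\theta}} \geq 0$; vanishing on an indicator of a point (A1) holds because a single point has zero Lebesgue measure. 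For $U^{\mathsf{Y}}$, A0 reduces to showing $\int_{\mathsf{Y}} \sup_{\theta}(f_{\bm{\theta}}(\theta) p_Y(y\mid\theta))\,\mathrm{d}y \geq 1$, which follows by bounding the supremum below by the value at any fixed $\theta_0$ with $f_{\bm{\theta}}(\theta_0)$ close to $1$ and integrating; A1 follows because if $f_{\bm{\theta}} = \mathbbm{1}_{\{\theta_0\}}$ then the supremum collapses to $p_Y(y \mid \theta_0)$, whose integral is exactly $1$, so $U^{\mathsf{Y}} = 0$. This last computation is also precisely what the preceding paragraph promised (``it vanishes when the true parameter $\theta^\star$ is known'').

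For A2 (maximality at the uninformative possibility function $f_{\bm{\theta}} \equiv 1$), the point is that $U^{\Theta}_{f_{\bm{\theta}}} = \int_{\Theta} f_{\bm{\theta}}(\theta)\,\mathrm{d}\theta \leq \int_{\Theta} 1\,\mathrm{d}\theta = \mathrm{vol}(\Theta)$, with equality for the constant function, and similarly $U^{\mathsf{Y}}$ is monotone in $f_{\bm{\theta}}$ pointwise (the integrand $\sup_{\theta}(f_{\bm{\theta}}(\theta)p_Y(y\mid\theta))$ is non-decreasing in $f_{\bm{\theta}}$), hence maximised at $f_{\bm{\theta}}\equiv 1$. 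Property A4, monotonicity under strictly-less-informative refinements, is essentially the same pointwise-monotonicity argument: if $f'_{\bm{\theta}} \geq f_{\bm{\theta}}$ everywhere then $U^{\Theta}_{f'_{\bm{\theta}}} \geq U^{\Theta}_{f_{\bm{\theta}}}$ and $U^{\mathsf{Y}}_{\opm'} \geq U^{\mathsf{Y}}_{\opm}$; to get the \emph{strict} version for $U^{\Theta}$ one uses the hypothesis that $f'_{\bm{\theta}} > f_{\bm{\theta}}$ on a set $A$ of positive volume, so the integral strictly increases. For $U^{\mathsf{Y}}$ one only gets the weak version, and I would make explicit why strictness can fail here: the map $\theta \mapsto p_Y(\cdot\mid\theta)$ need not be injective, so increasing $f_{\bm{\theta}}$ on $A$ can leave the pointwise supremum over $\theta$ unchanged if the newly-enlarged values are dominated elsewhere --- a concrete two-point example of $\theta$-values with identical likelihoods suffices to block strictness.

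Property A3 is the main obstacle and the only place any real computation is needed. Here $T$ is a bijection with $p_Y(y \mid T(\theta)) = \alpha p_Y(y\mid\theta) + (1-\alpha)p(y)$. The first task is to identify $f_{T(\bm{\theta})}$ via the change-of-variables formula recalled in the excerpt: since $T$ is bijective, $f_{T(\bm{\theta})}(\theta') = f_{\bm{\theta}}(T^{-1}(\theta'))$. For $U^{\mathsf{Y}}$ (claimed to satisfy A3 in general) I would substitute: $U^{\mathsf{Y}}_{\opm_{T}} + 1 = \int_{\mathsf{Y}} \sup_{\theta'}\big(f_{\bm{\theta}}(T^{-1}(\theta'))\,p_Y(y\mid\theta')\big)\,\mathrm{d}y$, reindex by $\theta = T^{-1}(\theta')$ inside the supremum (legitimate because $T$ is a bijection of $\Theta$), apply the affine identity for the likelihood, and split using $\sup_{\theta}\big(f_{\bm{\theta}}(\theta)(\alpha p_Y(y\mid\theta) + (1-\alpha)p(y))\big)$; the delicate step is pulling the $\alpha$ and $(1-\alpha)p(y)$ terms through the supremum. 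I expect this works cleanly only after integrating in $y$: one shows $\int_{\mathsf{Y}}\sup_{\theta}\big(f_{\bm{\theta}}(\theta)(\alpha p_Y(y\mid\theta)+(1-\alpha)p(y))\big)\,\mathrm{d}y = \alpha\big(U^{\mathsf{Y}}_{\opm}+1\big) + (1-\alpha)$, using $\sup_{\theta}f_{\bm{\theta}}(\theta)=1$ to handle the constant-in-$\theta$ part $(1-\alpha)p(y)$ and $\int p(y)\,\mathrm{d}y = 1$; subtracting $1$ then gives exactly $\alpha U^{\mathsf{Y}}_{\opm}$. The subtle point to get right is that the supremum of a sum is not the sum of suprema, so one cannot split termwise before integrating --- the identity is recovered only at the level of the $y$-integral, and I would isolate that as the key lemma. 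For $U^{\Theta}$, A3 is claimed only in the multinomial setting ($\Theta$ a simplex, $p_Y(y\mid\theta) = \theta_y$), where the affine condition on the likelihood becomes an explicit affine map $T$ on the simplex, and the scaling $U^{\Theta}_{f_{T(\bm{\theta})}} = \alpha U^{\Theta}_{f_{\bm{\theta}}}$ should follow from the Jacobian of that affine map together with the change of variables in the $\Theta$-integral; I would point to Remark~\ref{rem:PropA3} for why A3 is not expected to hold for $U^{\Theta}$ beyond this setting.
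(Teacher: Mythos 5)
Your checklist tracks the paper's proof almost item for item: non-negativity and the measure-zero argument for A0--A1 of $U^{\Theta}$, pointwise domination for A2, splitting the integral over $A$ for the strict A4, the affine change of variables for A3 of $U^{\Theta}$ in the multinomial setting, and, on the $\mathsf{Y}$ side, monotonicity of the pointwise supremum in $f_{\bm{\theta}}$ (weak A4), from which A0 and A2 are deduced by comparison with the indicator of a modal point and with the uninformative function. Your $\varepsilon$-argument for A0 of $U^{\mathsf{Y}}$ is marginally more careful than the paper's (which compares with $\bm{1}_{\hat{\theta}}$ at a point where $f_{\bm{\theta}}$ attains $1$), and your explanation of why strictness of A4 fails for $U^{\mathsf{Y}}$ is essentially the content of the paper's Remark~\ref{rem:PropA4}.

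The genuine gap is exactly where you suspected it: A3 for $U^{\mathsf{Y}}$. You rightly refuse to split the supremum of $f_{\bm{\theta}}(\theta)\bigl(\alpha p_Y(y\mid\theta)+(1-\alpha)p(y)\bigr)$ termwise, but you then defer the entire burden to an unproven ``key lemma'' asserting that equality is restored after integrating in $y$. That lemma is the whole content of the property, and it is not true as stated: only the inequality $\int \sup_\theta f_{\bm{\theta}}(\theta)\bigl(\alpha p_Y(y\mid\theta)+(1-\alpha)p(y)\bigr)\,\mathrm{d}y \le \alpha\int\sup_\theta f_{\bm{\theta}}(\theta)p_Y(y\mid\theta)\,\mathrm{d}y + (1-\alpha)$ is automatic, and the reverse would require $f_{\bm{\theta}}$ to equal $1$ at the maximiser of $f_{\bm{\theta}}(\theta)p_Y(y\mid\theta)$ for (almost) every $y$. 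Concretely, take $\mathsf{Y}=\{a,b\}$, $\Theta=[0,1]$ with $p_Y(a\mid\theta)=\theta$, $p(a)=p(b)=\tfrac12$, $\alpha=\tfrac12$ (so $T(\theta)=\tfrac{\theta}{2}+\tfrac14$, a bijection onto its image, which is the most one can ask of an affine contraction of the simplex), and $f_{\bm{\theta}}(\theta)=\theta$: then $U^{\mathsf{Y}}_{f_{\bm{\theta}}}=1+\tfrac14-1=\tfrac14$, while $U^{\mathsf{Y}}_{f_{T(\bm{\theta})}}=\sup_\theta\theta\bigl(\tfrac{\theta}{2}+\tfrac14\bigr)+\sup_\theta\theta\bigl(\tfrac34-\tfrac{\theta}{2}\bigr)-1=\tfrac34+\tfrac{9}{32}-1=\tfrac{1}{32}$, which is not $\alpha U^{\mathsf{Y}}_{f_{\bm{\theta}}}=\tfrac18$. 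For what it is worth, the paper's own proof performs the same termwise split without comment, so the step you left open is not one the paper closes either; but as a self-contained argument your proposal does not establish A3 for $U^{\mathsf{Y}}$, and the missing lemma cannot be supplied in the generality claimed.
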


The fact that $U^{\mathsf{Y}}$ only satisfies the weak version of Property A4 is discussed in Remark~\ref{rem:PropA4} in Appendix~\ref{proofs:AL}.

\subsection{Possibilistic Active Learning for Classification}
\label{sec:AlForClassification}

In classification, a set of labels $L$ and a set of inputs $\mathsf{X}$ are considered, and we assume that we are in the situation where we have already seen the labels of some set of inputs, defining the available data $\mathcal{D}$. We then consider an unlabelled input $x \in \mathsf{X}$ and denote by $Y$ the unknown label. We assume that the true distribution $p_Y(\cdot; \theta^{\star}_x)$ of the label $Y$ in $L$ is part of a parametric family $\{ p_Y(\cdot; \theta_x) : \theta_x \in \Theta \}$ of distributions on $L$. This setting matches with the type of OPM considered so far with $\mathsf{Y} = L$. This section explains how the measure of uncertainty $U^{L}$ can be used to define active learning acquisition functions by quantifying the current uncertainty at the unlabelled inputs.

For binary classification, we let $L$ be $\{-1, 1\}$ and set $p_Y(1; \theta_x) = \Phi(\theta_x) = 1 - p_Y(-1; \theta_x)$, with $\Phi(\cdot)$ the standard normal c.d.f., for any $\theta_x \in \Theta = \mathbb{R}$. We assume that the available information about $\theta^{\star}_x$ is encoded by a posterior possibility function describing $\bm{\theta}_x$ given $\mathcal{D}$, of the form $\overline{\mathrm{N}}(\mu_x, \sigma^2_x)$ for some $\mu_x \in \mathbb{R}$ and some $\sigma^2_x > 0$. The corresponding OPM is characterised by
\begin{equation*}
    \bar{P}_Y(B) = \sup_{\theta \in \mathbb{R}}\overline{\mathrm{N}}(\theta;\mu_x, \sigma^2_x) \sum_{l \in B} p_Y(l; \theta),
\end{equation*}
for any subset $B$ of $\{-1, 1\}$. The corresponding measure of uncertainty $U^L_{\mathrm{bin}}$ is expressed as
\begin{align*}
    U^L_{\mathrm{bin}} & = \sum_{l\in \{-1,1\}} \bar{P}_Y(\{l\}) - 1 \\
    & =  \sup_{\theta \in \mathbb{R}}\overline{\mathrm{N}}(\theta; \mu_x,\sigma^2_x) \Phi(\theta) 
    + \sup_{\theta \in \mathbb{R}}\overline{\mathrm{N}}(\theta; \mu_x,\sigma^2_x) (1 - \Phi(\theta)) - 1.
\end{align*}

For multiclass classification, we let $n$ be the size of $L$ and define $p_Y(\cdot; \theta_x)$ as the (logistic) softmax for any indexed family $\theta_x = \{ \theta_x^l \}_{l \in L}$ with $\theta_x^l \in \mathbb{R}$ for any $l \in L$. e denote by $\mathbb{R}^L$ the set of all such indexed families. We assume that the available information about $\theta^{\star}_x$ is encoded by a posterior possibility function describing $\bm{\theta}_x$ given $\mathcal{D}$, of the form $\prod_{l\in L} \overline{\mathrm{N}}(\mu_{l,x}, \sigma^2_{l,x})$ for some $\mu_{l,x} \in \mathbb{R}$ and some $\sigma^2_{l,x} > 0$, i.e., the uncertain variables $\bm{\theta}_x^l$, $l \in L$, are mutually independent. A similar calculation as in the binary case yields a measure of uncertainty $U^L_{\mathrm{multi}}$ defined by
\[
U^L_{\mathrm{multi}} = \sum_{l\in L} \sup_{\theta \in \mathbb{R}^L} p_Y(l;\theta) \prod_{l'\in L} \overline{\mathrm{N}}(\theta; \mu_{l',x}, \sigma^2_{l',x}) - 1.
\]

While the exact supremum of these distributions is not analytically solvable, it is well-suited for numerical optimization techniques. Due to the form of the sigmoid function, the total uncertainty is biased towards EU that induces classification uncertainty, echoing the focus of prediction oriented active learning strategies.

\subsection{Necessity of classification}

Another relevant quantity for active learning arises naturally when considering OPMs: if we consider the most likely label $l^* \in L$ for an input $x$, i.e., $l^* =\arg\max_{l \in L} p_Y(l; \mu_x)$ (with $\mu_x = \{\mu_{l,x}\}_{l \in L} \in \mathbb{R}^L$ in the multiclass case), we can compute the possibility and the \emph{necessity} of correct classification, i.e., of $l^*$ being indeed the most likely label. Formally, the notion of necessity can be defined as follows: the necessity $N(E)$ of an event $E$ is defined based on the possibility $\bar{P}(E^{\mathrm{c}})$ of the complementary event $E^{\mathrm{c}}$ as $N(E) = 1 - \bar{P}(E^{\mathrm{c}})$. The possibility of correct classification is not useful as it is always equal to $1$, since $l^*$ is the most likely label, however the necessity of correct classification quantifies the uncertainty around the decision boundary and can be used as an acquisition function for active learning. The rationale for this approach can be illustrated in the case of binary classification: we compare two inputs $x$ and $x'$, where the probability of the label $1$ is known to be in the intervals $[0.6, 1]$ and $[0.45,0.55]$, respectively; there is more EU about the input $x$, yet that EU will not affect the decision of selecting label $1$, whereas the smaller EU for the input $x'$ is consequential in terms of classification. This is captured by the necessity of correct classification which would be equal to $1$ for input $x$ and to $0$ for input $x'$. The two following propositions provide the expression of the necessity of correct classification for the two types of classification discussed in Section~\ref{sec:AlForClassification}.

\begin{proposition}
\label{prop:nec_binary}
In the context of binary classification with the standard normal c.d.f., the necessity of correct classification at input $x$ is $N_{\mathrm{bin}} = 1 - \overline{\mathrm{N}}(0; \mu_x, \sigma_x^2)$.
\end{proposition}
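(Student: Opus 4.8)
The plan is to unfold the definition $N(E) = 1 - \bar{P}(E^{\mathrm{c}})$ for the event $E$ that the label $l^*$ selected from $\mu_x$ coincides with the most likely label under the true parameter $\theta^{\star}_x$, and then to evaluate the resulting supremum using the unimodality of the Gaussian possibility function.

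First I would make $l^*$ explicit: since $p_Y(1;\theta) = \Phi(\theta)$ with $\Phi$ strictly increasing and $\Phi(0) = 1/2$, we have $p_Y(1;\mu_x) > p_Y(-1;\mu_x)$ if and only if $\mu_x > 0$, so $l^* = 1$ when $\mu_x > 0$ and $l^* = -1$ when $\mu_x < 0$ (the tie $\mu_x = 0$ is degenerate, and the claimed formula correctly returns $N_{\mathrm{bin}} = 0$ there). Correspondingly, the event ``$l^*$ is the most likely label under $\theta^{\star}_x$'', viewed as a subset of $\Theta = \mathbb{R}$, is the half-line $E = \{\theta : \Phi(\theta) > 1/2\} = (0,\infty)$ when $l^* = 1$, with complement $E^{\mathrm{c}} = (-\infty,0]$, and symmetrically $E = (-\infty,0)$, $E^{\mathrm{c}} = [0,\infty)$ when $l^* = -1$. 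By the marginalisation rule for OPMs, $\bar{P}(E^{\mathrm{c}})$ equals $\sup_{\theta \in E^{\mathrm{c}}} \overline{\mathrm{N}}(\theta;\mu_x,\sigma_x^2)$, since $\sum_{l} p_Y(l;\theta) = 1$.

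It then remains to compute this supremum. The map $\theta \mapsto \overline{\mathrm{N}}(\theta;\mu_x,\sigma_x^2) = \exp\!\big(-(\theta-\mu_x)^2/(2\sigma_x^2)\big)$ is strictly increasing on $(-\infty,\mu_x]$ and strictly decreasing on $[\mu_x,\infty)$. When $\mu_x > 0$, the set $E^{\mathrm{c}} = (-\infty,0]$ lies entirely in the increasing branch, so the supremum is attained at the right endpoint $\theta = 0$; when $\mu_x < 0$, the set $E^{\mathrm{c}} = [0,\infty)$ lies in the decreasing branch, so the supremum is again attained at $\theta = 0$. In either case $\bar{P}(E^{\mathrm{c}}) = \overline{\mathrm{N}}(0;\mu_x,\sigma_x^2)$, hence $N_{\mathrm{bin}} = 1 - \overline{\mathrm{N}}(0;\mu_x,\sigma_x^2)$. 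As a consistency check, $\bar{P}(E) = \sup_{\theta \in E}\overline{\mathrm{N}}(\theta;\mu_x,\sigma_x^2) = 1$ because $\mu_x \in E$, which recovers the stated fact that the possibility of correct classification is always $1$.

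There is no genuine obstacle in this argument; it reduces to handling the two sign cases for $l^*$ and invoking the monotonicity of the Gaussian bump around its mode. The only mild point to watch is the boundary $\theta = 0$, but this is immaterial because $\overline{\mathrm{N}}$ is continuous, so placing $0$ in $E$ or in $E^{\mathrm{c}}$ does not affect either supremum.
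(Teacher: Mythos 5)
Your proposal is correct and follows essentially the same route as the paper's proof: both reduce the complementary event to the half-line $\{\theta \le 0\}$ (or $\{\theta \ge 0\}$) via the condition $\Phi(\theta) \le 1/2$ and then use the monotonicity of the Gaussian possibility function on either side of its mode $\mu_x$ to locate the supremum at $\theta = 0$. The paper treats only the case $l^* = 1$ and notes the other case by symmetry, whereas you spell out both signs and add a consistency check; these are cosmetic differences.
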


\begin{proposition}
\label{prop:nec_multi}
In the context of multiclass classification with a (logistic) softmax, the necessity of correct classification at input $x$ is $N_{\mathrm{multi}} = 1 - \max_{l \neq l^*} \overline{\mathrm{N}}(\mu_{l,x}; \mu_{l^*,x}, \sigma_{l^*,x}^2 + \sigma_{l,x}^2)$.
\end{proposition}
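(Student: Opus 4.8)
The plan is to unwind the definition of necessity, reduce $\bar{P}(E^{\mathrm c})$ to a finite maximum using the maxitivity of possibility functions, and then solve a two-dimensional Gaussian optimisation for each term. Throughout I write $\mu_l \defeq \mu_{l,x}$ and $\sigma_l^2 \defeq \sigma_{l,x}^2$, and recall that $\bm{\theta}_x = \{\bm{\theta}_x^l\}_{l \in L}$ is described, given $\mathcal{D}$, by $f_{\bm{\theta}_x}(\theta) = \prod_{l \in L} \overline{\mathrm{N}}(\theta^l; \mu_l, \sigma_l^2)$. First I would pin down the event $E$ of correct classification: since the softmax preserves the coordinate ordering, i.e. $p_Y(l;\theta) \geq p_Y(l';\theta) \iff \theta^l \geq \theta^{l'}$, the most likely label is $l^* = \argmax_{l \in L} \mu_l$, so $\mu_{l^*} \geq \mu_l$ for all $l$, and $E = \{\theta \in \mathbb{R}^L : \theta^{l^*} \geq \theta^l \text{ for all } l \neq l^*\}$, whence $E^{\mathrm c} = \bigcup_{l \neq l^*} A_l$ with $A_l = \{\theta : \theta^l > \theta^{l^*}\}$.

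Since $\bm{\theta}_x$ carries purely epistemic uncertainty, $\bar{P}(E^{\mathrm c}) = \sup_{\theta \in E^{\mathrm c}} f_{\bm{\theta}_x}(\theta) = \max_{l \neq l^*} \sup_{\theta \in A_l} f_{\bm{\theta}_x}(\theta)$, which reduces the problem to evaluating $\bar{P}(A_l)$ for each $l \neq l^*$. Because $f_{\bm{\theta}_x}$ factorises over coordinates and $A_l$ constrains only $\theta^l$ and $\theta^{l^*}$, the supremum over the remaining coordinates is attained at $\theta^{l'} = \mu_{l'}$ and contributes a factor $1$, so $\bar{P}(A_l) = \sup_{a > b} \overline{\mathrm{N}}(a; \mu_l, \sigma_l^2)\,\overline{\mathrm{N}}(b; \mu_{l^*}, \sigma_{l^*}^2)$. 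By continuity this equals $\exp\big(-\frac{1}{2}\inf_{a \geq b}[(a-\mu_l)^2/\sigma_l^2 + (b-\mu_{l^*})^2/\sigma_{l^*}^2]\big)$; the objective is strictly convex, and because $\mu_l \leq \mu_{l^*}$ its minimum over $\{a \geq b\}$ is attained on the diagonal $a = b = t$, where minimising $(t-\mu_l)^2/\sigma_l^2 + (t-\mu_{l^*})^2/\sigma_{l^*}^2$ gives $t^* = (\mu_l\sigma_{l^*}^2 + \mu_{l^*}\sigma_l^2)/(\sigma_l^2+\sigma_{l^*}^2)$ with minimal value $(\mu_{l^*}-\mu_l)^2/(\sigma_l^2+\sigma_{l^*}^2)$. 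Hence $\bar{P}(A_l) = \overline{\mathrm{N}}(\mu_l; \mu_{l^*}, \sigma_{l^*}^2+\sigma_l^2)$, and substituting into $N_{\mathrm{multi}} = 1 - \bar{P}(E^{\mathrm c})$ yields the claimed formula. As a cross-check, one can instead use the change-of-variable formula together with the standard behaviour of Gaussian possibility functions under linear combinations (Section~\ref{sec:possibilisticGaussianProcesses}): the difference of the independent uncertain variables $\bm{\theta}_x^l$ and $\bm{\theta}_x^{l^*}$ is described by $\overline{\mathrm{N}}(\mu_l - \mu_{l^*}, \sigma_l^2+\sigma_{l^*}^2)$, whose peak lies at $\mu_l - \mu_{l^*} \leq 0$, so $\bar{P}(A_l) = \sup_{\delta > 0}\overline{\mathrm{N}}(\delta; \mu_l-\mu_{l^*}, \sigma_l^2+\sigma_{l^*}^2) = \overline{\mathrm{N}}(0; \mu_l-\mu_{l^*}, \sigma_l^2+\sigma_{l^*}^2)$, the same value.

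The only genuine obstacle is the constrained two-dimensional optimisation: one must exploit the ordering $\mu_{l^*} \geq \mu_l$ to see that the optimum sits on the boundary $a = b$ — so that the variance-summed Gaussian $\overline{\mathrm{N}}(\mu_l; \mu_{l^*}, \sigma_{l^*}^2+\sigma_l^2)$, which is strictly below $1$ whenever $\mu_l \neq \mu_{l^*}$, is produced rather than the trivial value $1$ coming from an interior optimum — and then carry out the short algebra. Everything else is routine manipulation of suprema of products of Gaussian possibility functions, and the degenerate case $\mu_l = \mu_{l^*}$ is handled automatically, giving $\bar{P}(A_l) = 1$ and hence $N_{\mathrm{multi}} = 0$.
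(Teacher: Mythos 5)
Your proposal is correct and follows essentially the same route as the paper's proof: maxitivity of the OPM over the union $\bigcup_{l\neq l^*}A_l$, reduction of each event $p_Y(l;\bm{\theta}_l)\geq p_Y(l^*;\bm{\theta}_{l^*})$ to $\bm{\theta}_l\geq\bm{\theta}_{l^*}$, the observation that the ordering $\mu_{l,x}\leq\mu_{l^*,x}$ forces the optimum onto the diagonal $\bm{\theta}_l=\bm{\theta}_{l^*}$, and maximisation of the product of two Gaussian possibility functions along that diagonal. You merely spell out the boundary/convexity argument and the explicit maximiser $t^*$ in more detail than the paper does, and add an equivalent change-of-variables cross-check.
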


In both cases, active learning is carried out by querying the input for which the necessity of correct classification is the smallest, which corresponds to the input with most EU at the decision boundary. This strategy requires either no optimisation/integration or a very simple maximisation over the label space, and scales well with the size of $L$ as a consequence. The behaviour of $N_{\mathrm{bin}}$ and $N_{\mathrm{multi}}$ is the opposite of the one of $U^L_{\mathrm{bin}}$ and $U^L_{\mathrm{multi}}$, i.e., the former increase when the latter decrease. Simply considering $1-N_{\mathrm{bin}}$ and $1-N_{\mathrm{multi}}$ is sufficient to prove some desirable properties as shown in the following proposition.

\begin{proposition}
\label{prop:nec_properties}
The credibility $1-N_{\mathrm{bin}}$ satisfies Properties A0-A2 as well as the strict version of Property A4, and $1-N_{\mathrm{multi}}$ satisfies Properties A0-A2 as well as the weak version of Property A4.
\end{proposition}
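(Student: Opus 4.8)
The plan is to first put both credibilities into a common geometric form. Using Proposition~\ref{prop:nec_binary} and the fact that the posterior possibility function is $f_{\bm{\theta}_x}(\theta) = \overline{\mathrm{N}}(\theta; \mu_x, \sigma_x^2)$, one has $1 - N_{\mathrm{bin}} = f_{\bm{\theta}_x}(0)$, the value that $f_{\bm{\theta}_x}$ assigns to the decision boundary $\theta = 0$. Using Proposition~\ref{prop:nec_multi} together with the standard algebra of independent Gaussian possibility functions recalled in Section~\ref{sec:possibilityTheory} (the difference $\bm{\theta}_x^{l^*} - \bm{\theta}_x^{l}$ is Gaussian with the variances added), one has $1 - N_{\mathrm{multi}} = \max_{l \neq l^*} f_{\bm{\theta}_x^{l^*} - \bm{\theta}_x^{l}}(0)$, the largest value a pairwise-difference possibility function assigns to the corresponding pairwise decision boundary. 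In both cases the quantity is an instance of $\overline{\mathrm{N}}$ (or a maximum of finitely many of them), hence lies in $[0,1]$; in particular it is non-negative, which is Property A0.

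For Property A1, I would read ``$f_{\bm{\theta}_x}$ is the indicator of a point $\theta_0$'' as the $\sigma^2 \to 0$ limit of $\overline{\mathrm{N}}(\theta_0, \sigma^2)$; provided $\theta_0$ is not on the decision boundary --- which is exactly the condition making $l^*$ well-defined --- one gets $f_{\bm{\theta}_x}(0) = \mathbbm{1}_{\{\theta_0\}}(0) = 0$, and in the multiclass case each pairwise-difference possibility function is the indicator of a nonzero point evaluated at $0$, so $1 - N_{\mathrm{multi}} = 0$. For Property A2, the uninformative possibility function corresponds to zero precision ($\sigma^2 \to \infty$), in which case $f_{\bm{\theta}_x}(0) = 1$ and each pairwise-difference factor tends to $1$; since every instance of $\overline{\mathrm{N}}$ is bounded by $1$, this value is maximal.

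The substantive step is Property A4. Since a less informative $f'_{\bm{\theta}}$ dominates $f_{\bm{\theta}}$ pointwise, and marginalisation and the change-of-variable formula preserve the pointwise order of possibility functions, we obtain $f'_{\bm{\theta}_x}(0) \geq f_{\bm{\theta}_x}(0)$ and, for every $l$, $f'_{\bm{\theta}_x^{l^*} - \bm{\theta}_x^{l}}(0) \geq f_{\bm{\theta}_x^{l^*} - \bm{\theta}_x^{l}}(0)$ (domination of Gaussian possibility functions moreover forces the means to agree, upon evaluating at the mode of $f_{\bm{\theta}}$, so that $l^*$ is unchanged); taking the maximum over $l$ preserves the inequality, yielding the weak version of A4 for both $1 - N_{\mathrm{bin}}$ and $1 - N_{\mathrm{multi}}$. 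For the strict version in the binary case, I would show that a Gaussian possibility function $\overline{\mathrm{N}}(\nu, \tau^2)$ that dominates $\overline{\mathrm{N}}(\mu, \sigma^2)$ pointwise and differs from it must have $\tau^2 > \sigma^2$ (inspect the leading coefficient of the everywhere-nonnegative quadratic $\theta \mapsto (\theta - \mu)^2/\sigma^2 - (\theta - \nu)^2/\tau^2$), and then that in the non-degenerate case $\mu_x \neq 0$ the value at $0$ strictly increases: if it did not, $\theta = 0$ would be a zero-valued global minimiser of that quadratic, so its value and first-order conditions at $0$ would hold together and force $\mu = \nu$ and $\sigma^2 = \tau^2$ --- a contradiction. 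For $1 - N_{\mathrm{multi}}$ only the weak version can hold, since this quantity depends on $f_{\bm{\theta}_x}$ only through the means and through the variances of $l^*$ and of its nearest competitor; as soon as $|L| \geq 3$, inflating only the variance of some other label makes the product possibility function strictly less informative while leaving $1 - N_{\mathrm{multi}}$ unchanged.

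I expect the A4 analysis to be the main obstacle: the monotonicity-under-marginalisation part is routine, but identifying exactly where the increase becomes strict --- and constructing the configuration that blocks strictness in the multiclass case --- needs the quadratic argument and the counterexample above. The remaining wrinkle is that A1 and A2 as phrased presuppose a well-defined $l^*$, which is why I treat the degenerate and uninformative situations as limiting cases.
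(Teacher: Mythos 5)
Your proposal is correct and follows essentially the same route as the paper's proof: reading $1-N_{\mathrm{bin}}$ and $1-N_{\mathrm{multi}}$ as possibility values in $[0,1]$ for A0, treating the point-indicator and uninformative cases as the $\sigma^2\to 0$ and $\sigma^2\to\infty$ limits for A1--A2, observing that a Gaussian possibility function can only be strictly less informative than another by keeping the same mean and enlarging the variance (which, with $\mu_x\neq 0$, gives the strict version of A4 in the binary case), and blocking strictness in the multiclass case by inflating the variance of a label that is neither $l^*$ nor its nearest competitor. Your quadratic-domination argument and the difference-variable reformulation of $N_{\mathrm{multi}}$ are just more explicit versions of steps the paper states directly.
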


What is missing so far is the mechanism by which the parameters of the involved normal possibility functions are calculated based on the available data $\mathcal{D}$. In the probabilistic setting, using a GP would be one tangible option for carrying out such a task. Since a possibilistic version of GPs has never been considered prior to this work, we introduce it in the following section.

\section{Possibilistic Gaussian Processes}
\label{sec:possibilisticGaussianProcesses}

We begin by formally defining the notion of covariance in possibility theory and proving that it retains many of the properties of the probabilistic version, before defining a possibilistic version of the notion of GP and demonstrating how it can be utilised for modelling and prediction. Given the significantly different definitions in the probabilistic and possibilistic cases, it is somewhat surprising that the posterior GP models prove to be equivalent. We will highlight where the differences and advantages appear within the interpretation of the possibilistic model. Although the properties detailed in this section closely mirror those observed in the probabilistic case, there are important differences in their derivations, which are detailed in Appendix~\ref{proofs:PGP}.

\subsection{Covariance}\label{covariance-section}

This section introduces the concept of precision and covariance for uncertain variables, which is necessary for the later introduction of GPs within possibility theory. This is the first time these properties have been formally defined for possibility functions. The definition is a natural extension of the previously described notion of variance in possibility theory \citep{houssineau2019elements} as $\mathbb{V}^\star(\bm{\theta}) = \mathbb{E}^\star\big(-\frac{d^2}{d\theta^2}\log f_{\bm{\theta}}(\bm{\theta})\big)^{-1}$, under the assumption that the possibility function  $f_{\bm{\theta}}$ is twice continuously differentiable and uni-modal. This definition can be naturally extended to a definition of covariance as follows. Given an uncertain variable $\bm{\theta} = (\bm{\theta}_1,\dots,\bm{\theta}_n)$ with joint possibility function $f_{\bm{\theta}}$ we define the precision between $\bm{\theta}_i$ and $\bm{\theta}_j$ to be 
\begin{equation}
    \pre^{\star}_{\bm{\theta}}(\bm{\theta}_i, \bm{\theta}_j) = \mathbb{E}^\star \left( -\frac{\partial^2}{\partial \theta_i \partial \theta_j } \log f_{\bm{\theta}}(\bm{\theta}) \right)
\end{equation}
The precision matrix of $\bm{\theta}$ is defined accordingly as $\pre^{\star}\left(\bm{\theta} \right) = \left[ \pre^\star_{\bm{\theta}}(\bm{\theta}_i,\bm{\theta}_j)\right]_{i,j=1}^n$. Because of the properties of the expected value $\mathbb{E}^{\star}$, the precision matrix $\pre^{\star}(\bm{\theta})$ can be seen to correspond to the notion of observed information.

As is standard, the covariance matrix of $\bm{\theta}$ is defined to be the inverse, when it exists, of the precision matrix $\cov^{\star}(\bm{\theta}) = (\pre^{\star} \left(\bm{\theta} \right))^{-1}$. Finally, the covariance $\cov^{\star} \left(\bm{\theta}_i, \bm{\theta}_j \right)$ between $\bm{\theta}_i,\bm{\theta}_j$ for all $i,j$ is defined to be $(i,j)$-th entry of the matrix $\cov^{\star}(\bm{\theta})$. The covariance $\cov^{\star}$ enjoys many of the properties of the standard notion of covariance

\begin{proposition}\label{cov properties} For all $i,j,k \in \lbrace 1,\ldots , n \rbrace$, $\alpha,\beta \in \mathbb{R} $ and $F \in \mathbb{R}^{N\times n}$, it holds that
\begin{align*}
    \cov^\star(\bm{\theta}_i,\bm{\theta}_i) & = \mathbb{V}^\star(\bm{\theta}_i), \\
    \cov^\star(\bm{\theta}_i+\bm{\theta}_j,\bm{\theta}_k) & = \cov^\star(\bm{\theta}_i,\bm{\theta}_k) + \cov^\star(\bm{\theta}_j,\bm{\theta}_k), \\
    \cov^\star(\bm{\theta}_i, \bm{\theta}_j+\bm{\theta}_k) & = \cov^\star(\bm{\theta}_i,\bm{\theta}_j) + \cov^\star(\bm{\theta}_i,\bm{\theta}_k), \\
    \cov^\star (\bm{\theta}_i + \alpha,\bm{\theta}_j + \beta) & = \cov^\star (\bm{\theta}_i,\bm{\theta}_j), \\
    \cov^\star(\bm{\theta}_i,\bm{\theta}_j) & = \cov^\star(\bm{\theta}_j,\bm{\theta}_i), \\
    \cov^\star(\alpha \bm{\theta}_i,\beta\bm{\theta}_j) & = \alpha\beta \cov^\star( \bm{\theta}_i,\bm{\theta}_j), \\
    \cov^\star(F\bm{\theta}) & = F^\top \cov^\star(\bm{\theta}) F.
\end{align*}
Further, if $\bm{\theta}_i$ and $\bm{\theta}_j$ are independent, then $\cov^\star (\bm{\theta}_i,\bm{\theta}_j) = 0$.
    
\end{proposition}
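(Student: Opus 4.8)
The plan is to express everything in terms of the negative log-possibility $g \defeq -\log f_{\bm{\theta}}$. Writing $\mu \defeq \mathbb{E}^\star(\bm{\theta})$ for the (assumed interior, unique) mode, uni-modality and twice continuous differentiability give $\nabla g(\mu) = 0$, so that $P \defeq \pre^\star(\bm{\theta}) = \nabla^2 g(\mu)$ and $C \defeq \cov^\star(\bm{\theta}) = P^{-1}$. I would then isolate two reusable facts. \emph{(i) Affine change of variables.} For an invertible affine map $T(\theta) = A\theta + c$, the change-of-variable formula of Section~\ref{sec:possibilityTheory} gives $f_{T(\bm{\theta})}(\eta) = f_{\bm{\theta}}(T^{-1}\eta)$; differentiating twice and using $\nabla g(\mu) = 0$ shows that $T\mu$ is the mode of $f_{T(\bm{\theta})}$ and that $\pre^\star(T\bm{\theta}) = A^{-\top} P A^{-1}$ there, hence $\cov^\star(T\bm{\theta}) = A\,C\,A^\top$; in particular $\cov^\star$ is unchanged when $A = I$ (pure translation). \emph{(ii) Sup-marginalisation is Schur complementation.} For an index set $S$ with complement $S^{\mathrm{c}}$, the marginal $f_{\bm{\theta}_S}(\theta_S) = \sup_{\theta_{S^{\mathrm{c}}}} f_{\bm{\theta}}(\theta)$ satisfies $-\log f_{\bm{\theta}_S} = \inf_{\theta_{S^{\mathrm{c}}}} g$; by the implicit function theorem (using that the $S^{\mathrm{c}}$-block of $P$ is invertible) the partial minimiser is locally a smooth function of $\theta_S$, and a second-order expansion at $\mu$ shows that the Hessian of $-\log f_{\bm{\theta}_S}$ at $\mu_S$ is the Schur complement $P_{SS} - P_{SS^{\mathrm{c}}}P_{S^{\mathrm{c}}S^{\mathrm{c}}}^{-1}P_{S^{\mathrm{c}}S}$, which by the standard block-inverse identity equals $(C_{SS})^{-1}$. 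Thus the covariance matrix of the subvector $\bm{\theta}_S$ is exactly the submatrix $C_{SS}$ of $C$.

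Given (i) and (ii), each claim is short. \textbf{Property 1} ($\cov^\star(\bm{\theta}_i,\bm{\theta}_i) = \mathbb{V}^\star(\bm{\theta}_i)$) is the case $S = \{i\}$ of (ii) combined with the definition $\mathbb{V}^\star(\bm{\theta}_i) = \mathbb{E}^\star\bigl(-\tfrac{d^2}{d\theta_i^2}\log f_{\bm{\theta}_i}(\bm{\theta}_i)\bigr)^{-1}$. \textbf{Symmetry} holds because $P$ is symmetric by Schwarz's theorem ($f_{\bm{\theta}} \in C^2$), hence so is $C = P^{-1}$. \textbf{Translation invariance} $\cov^\star(\bm{\theta}_i + \alpha, \bm{\theta}_j + \beta) = \cov^\star(\bm{\theta}_i,\bm{\theta}_j)$ is the $A = I$ case of (i), read off entrywise. \textbf{Bilinearity} and \textbf{the scaling identity} are obtained by applying (i) with $A$ the shear sending coordinate $i$ to $\theta_i + \theta_j$ (resp.\ the diagonal matrix with $\alpha,\beta$ in positions $i,j$) and reading off the relevant entry of $A C A^\top$, which yields $\cov^\star(\bm{\theta}_i,\bm{\theta}_k) + \cov^\star(\bm{\theta}_j,\bm{\theta}_k)$ (resp.\ $\alpha\beta\,\cov^\star(\bm{\theta}_i,\bm{\theta}_j)$); if a component appearing on the right is not already present in $\bm{\theta}$, one first embeds $\bm{\theta}$ in a larger family so that $A$ is a well-defined bijection. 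The general rule $\cov^\star(F\bm{\theta}) = F^\top\cov^\star(\bm{\theta})F$ is (i) when $F$ is square and invertible, and for rectangular $F$ of full rank one completes $F$ to a bijection and then marginalises the extra components away using (ii). Finally, if $\bm{\theta}_i$ and $\bm{\theta}_j$ are \emph{independent} then $f_{\bm{\theta}_i\bm{\theta}_j}$ factorises, so $-\log f_{\bm{\theta}_i\bm{\theta}_j}$ is additively separable and its $2\times 2$ precision matrix is diagonal; by (ii) with $S = \{i,j\}$ this precision equals $(C_{\{i,j\},\{i,j\}})^{-1}$, so $C_{\{i,j\},\{i,j\}}$ is itself diagonal and $\cov^\star(\bm{\theta}_i,\bm{\theta}_j) = C_{ij} = 0$.

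I expect the main obstacle to be the rigorous proof of (ii) for a general possibility function: the hypotheses only say $f_{\bm{\theta}}$ is $C^2$ and uni-modal, so one must check that the sup-marginal $f_{\bm{\theta}_S}$ is again $C^2$ and uni-modal near its mode (so that $\pre^\star$ and $\mathbb{V}^\star$ are defined for it) and that its Hessian there is the Schur complement. The delicate points are that the partial maximiser over $\theta_{S^{\mathrm{c}}}$ is attained at an interior point depending smoothly on $\theta_S$ — which requires $P_{S^{\mathrm{c}}S^{\mathrm{c}}}$ to be non-singular — and that the global supremum coincides with this local maximum in a neighbourhood of the mode. I would first verify the statement for the Gaussian possibility function $\overline{\mathrm{N}}(\mu,\Sigma)$, where the marginal is exactly $\overline{\mathrm{N}}(\mu_S,\Sigma_{SS})$ and (ii) reduces to the familiar Gaussian block identities, and then treat the general case by expanding $g$ to second order at $\mu$ and arguing that higher-order terms do not affect the Hessian of the marginal at $\mu_S$. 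Once (i) and (ii) are secured, the remaining steps are routine matrix algebra.
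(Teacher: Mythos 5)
Your proof is correct and rests on the same two pillars as the paper's — a second-order Taylor expansion of $-\log f_{\bm{\theta}}$ at the mode, and the fact that sup-marginalisation acts on the precision matrix by Schur complementation (the paper's \emph{consistency criterion}, your fact (ii)) — but you organise the deductions in the opposite direction. The paper proves bilinearity first, by writing the possibility function of $(\bm{\theta}_1+\bm{\theta}_2,\bm{\theta}_3)$ as a constrained supremum, solving explicitly for the critical point $h_1^c$ and reading off the $(1,2)$ entry of the resulting covariance matrix; it then obtains the linear-map formula for $\cov^\star(F\bm{\theta})$ by expanding sums via bilinearity. You instead prove the linear-map formula first, for invertible affine $T$, where no supremum needs to be solved at all (the change-of-variable formula gives $f_{T(\bm{\theta})}=f_{\bm{\theta}}\circ T^{-1}$ directly and the chain rule gives $\pre^\star(T\bm{\theta})=A^{-\top}PA^{-1}$ at the transported mode), and you recover bilinearity as the shear-plus-marginalisation special case. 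This buys a cleaner argument — the only supremum you ever have to analyse is the one in the consistency criterion — and you are also more explicit than the paper about the delicate points there (interior attainment of the partial maximiser, invertibility of $P_{S^{\mathrm{c}}S^{\mathrm{c}}}$, smooth dependence on $\theta_S$), which the paper's Taylor-expansion manipulation silently assumes. Two small remarks. First, your fact (i) yields $\cov^\star(F\bm{\theta})=F\cov^\star(\bm{\theta})F^\top$, which is the dimensionally consistent form for $F\in\mathbb{R}^{N\times n}$ and is what the paper's own appendix computation actually produces in its final line; the statement's $F^\top\cov^\star(\bm{\theta})F$ appears to be a typo, so do not contort your derivation to match it. Second, the scaling identity via an invertible diagonal $A$ requires $\alpha\beta\neq 0$; the degenerate case $\alpha=0$, where $\alpha\bm{\theta}_i$ is described by an indicator and the differential definition of precision no longer applies, is left unaddressed both by you and by the paper.
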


As an example, consider an uncertain variable $\bm{\theta}$ described by the normal possibility function $\overline{\mathrm{N}}(\mu,\Sigma)$. Mirroring the probabilistic case we have that $\mathbb{E}^\star(\bm{\theta})=\mu$ and $\cov^\star(\bm{\theta})=\Sigma$. Although the notions of expected value and variance differ between possibility and probability theory in general, the possibilistic Gaussian maintains many of the convenient properties of the probabilistic version since the mean and mode are equal for Gaussian distribution and since the observed information is the inverse of the covariance matrix in the considered setting.

Given a marginal possibility function $f_{\bm{\theta}_{1:k}}$, with $k$ less than $n$, deduced from the full possibility function $f_{\bm{\theta}_{1:n}}$, it holds that the covariance calculated from the former agrees with the one calculated from the latter between any two uncertain variables $\bm{\theta}_i$ and $\bm{\theta}_j$ with $i,j \in \lbrace 1,\dots, k\rbrace$. This result, which we will refer to as the \emph{consistency criterion}, justifies why we do not indicate the underlying vector $\bm{\theta}$ when writing the covariance between two elements as we do for the precision. The definitions of the expected value $\mathbb{E}^{\star}$ and of the covariance matrix mean that the Laplace approximation becomes a moment matching method in the context of possibility theory.

\subsection{Definition of the Possibilistic Gaussian Process}
A possibilistic GP (PGP) on a set $\mathsf{X}$ is defined to be a collection of uncertain variables, any finite number of which are described by a joint Gaussian possibility function. The corresponding uncertain variable is a function $\bm{g}(\cdot)$ on $\mathsf{X}$ described by a possibility function which we will denote by $\overline{\mathrm{GP}}(m(\cdot), k(\cdot,\cdot))$, with $m$ the underlying expected function, i.e.\ $m(x) = \mathbb{E}^\star(\bm{g}(x))$ for any $x \in \mathsf{X}$, and $k$ the covariance function corresponding to the notion of covariance defined in Section~\ref{covariance-section}.  Note that the consistency requirement, or marginalisation property, is maintained in PGPs. As opposed to standard GPs, the PGP $\bm{g}$ models the possible values of a true and fixed function, say $g^{\star}$, rather than a distribution of random functions. For a given point $x \in \mathsf{X}$, we see that $\bm{g}(x)$ is an uncertain variable describing the information we have on the value of $g^{\star}$ at the point $x$.

\subsection{Modelling and Prediction with Possibilistic Gaussian Processes}

The separation of AU and EU within the considered framework enables us to consider both systematic and random sources of error when modelling with PGPs. Here we present four cases, regression with both random and systematic sources of error, and classification with random error in labelling in the binary and multiclass cases. In each case we consider training points $X = (X_1,\dots,X_n)$ with corresponding observed values $y = (y_1,\dots,y_n)^{\top}$, we aim to predict the mean $\mu_{\mathrm{t}}$, covariance $\Sigma_{\mathrm{t}}$, and when appropriate label probabilities at test points $X_{\mathrm{t}}$. The following propositions formalise the results for the four considered cases and presents the appropriate predictive equations. For brevity, the new notation presented in the propositions below is defined in the proofs in Appendix~\ref{proofs:PGP}.

\begin{proposition}[Regression with Systematic/Random Error]
\label{prop:GP pred regression}
Let observations be either of the form $\bm{y} = \bm{g}(X) + \bm{\epsilon}$ with $\bm{\epsilon}$ described by $\overline{\mathrm{N}}(0,\sigma^2I)$ or of the form $\bm{y} = \bm{g}(X) + \epsilon$ with $\epsilon$ characterised by $\mathrm{N}(0,\sigma^2I)$. Given a covariance function $K(\cdot,\cdot)$ satisfying the standard assumptions, the conditional possibility function for $\bm{g}_{\mathrm{t}}$ given $\bm{y}$ is found to be $\overline{\mathrm{N}}(\mu_{\mathrm{t}}, \Sigma_{\mathrm{t}})$ where
\begin{subequations}
\label{Reg with Sys Error-Prediction }
\begin{align} 
     \mu_{\mathrm{t}} &= K_{\mathrm{t}}[K+\sigma^2 I]^{-1}y \\
     \Sigma_{\mathrm{t}} &= K_{\mathrm{t}\mathrm{t}}-K_{\mathrm{t}}[K+\sigma^2 I]^{-1}K_{\mathrm{t}}^\intercal,
\end{align}
\end{subequations}
with $K=K(X,X)$, $K_{\mathrm{t}}=K(X_{\mathrm{t}},X)$ and $K_{\mathrm{t}\mathrm{t}}=K(X_{\mathrm{t}},X_{\mathrm{t}})$.
\end{proposition}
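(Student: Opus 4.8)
The plan is to reduce both observation models to the conditioning identity for Gaussian possibility functions recalled in Section~\ref{sec:possibilityTheory}: the systematic-error model is handled entirely within possibility theory, while the random-error model is shown to produce the same posterior possibility function. Assume without loss of generality that the prior mean function of the PGP vanishes (otherwise replace $y$ by $y - m(X)$ and add $m(X_{\mathrm t})$ to $\mu_{\mathrm t}$ afterwards), and write $\bm{g}_{\mathrm t} = \bm{g}(X_{\mathrm t})$, $\bm{g}_X = \bm{g}(X)$. By the defining property of a PGP the pair $(\bm{g}_{\mathrm t},\bm{g}_X)$ is described by $\overline{\mathrm{N}}(0,\Sigma)$ with $\Sigma$ the block matrix with blocks $K_{\mathrm{tt}}$, $K_{\mathrm t}$, $K_{\mathrm t}^\intercal$, $K$; since $K$ is positive semi-definite and $\sigma^2>0$, the matrix $K+\sigma^2 I$ is invertible.

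\emph{Systematic-error model.} Here $\bm{\epsilon}$ is an uncertain variable described by $\overline{\mathrm{N}}(0,\sigma^2 I)$, independent of $\bm{g}$, so $(\bm{g}_{\mathrm t},\bm{g}_X,\bm{\epsilon})$ is described by the product of the two Gaussian possibility functions, which is again Gaussian with block-diagonal covariance. Applying the change-of-variable formula to the linear map $(\bm{g}_{\mathrm t},\bm{g}_X,\bm{\epsilon})\mapsto(\bm{g}_{\mathrm t},\bm{g}_X+\bm{\epsilon})=(\bm{g}_{\mathrm t},\bm{y})$ keeps the description Gaussian, and the bilinearity, translation-invariance and transformation rules of Proposition~\ref{cov properties}, together with the vanishing of $\cov^\star$ between independent uncertain variables, give $\cov^\star(\bm{g}_{\mathrm t})=K_{\mathrm{tt}}$, $\cov^\star(\bm{g}_{\mathrm t},\bm{y})=\cov^\star(\bm{g}_{\mathrm t},\bm{g}_X)=K_{\mathrm t}$ and $\cov^\star(\bm{y})=\cov^\star(\bm{g}_X)+\cov^\star(\bm{\epsilon})=K+\sigma^2 I$, while the mean of $(\bm{g}_{\mathrm t},\bm{y})$ is zero. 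The Gaussian conditioning identity of Section~\ref{sec:possibilityTheory}, applied with $\bm{\theta}_1=\bm{g}_{\mathrm t}$ and $\bm{\theta}_2=\bm{y}$, then yields $f_{\bm{g}_{\mathrm t}\mid\bm{y}}(\cdot\mid y)=\overline{\mathrm{N}}(\mu_{\mathrm t},\Sigma_{\mathrm t})$ with $\mu_{\mathrm t}$ and $\Sigma_{\mathrm t}$ as stated.

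\emph{Random-error model.} Now $\epsilon$ is a genuine random variable with density $\mathrm{N}(\cdot\,;0,\sigma^2 I)$, so $(\bm{g}_{\mathrm t},\bm{g}_X)$ plays the role of the epistemic parameter and $\bm{y}$ that of the observation in the OPM~\eqref{OPM}, with likelihood $p_Y(y\mid\theta)=\mathrm{N}(y;\theta_X,\sigma^2 I)$ depending only on the $\bm{g}_X$-component $\theta_X$ of $\theta=(\theta_{\mathrm t},\theta_X)$. The posterior possibility function for $(\bm{g}_{\mathrm t},\bm{g}_X)$ given $\bm{y}=y$ is, up to the sup-normalisation, proportional in $\theta$ to $p_Y(y\mid\theta)\,\overline{\mathrm{N}}(\theta;0,\Sigma)$; the key observation is that $p_Y(y\mid\theta)=(2\pi\sigma^2)^{-n/2}\,\overline{\mathrm{N}}(y-\theta_X;0,\sigma^2 I)$, so that the constant $(2\pi\sigma^2)^{-n/2}$ cancels in the normalisation and this posterior coincides exactly with the joint conditional possibility function $f_{\bm{g}_{\mathrm t},\bm{g}_X\mid\bm{y}}(\cdot\mid y)$ of the systematic-error model. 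Marginalising out $\theta_X$ --- which preserves Gaussianity by the marginalisation property of Gaussian possibility functions and the consistency criterion for $\cov^\star$ --- therefore gives the same $\overline{\mathrm{N}}(\mu_{\mathrm t},\Sigma_{\mathrm t})$.

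\emph{Main obstacle.} There is essentially no analytic difficulty beyond one standard Gaussian algebraic check: that the Schur-complement expressions produced by the conditioning identity really are $K_{\mathrm t}[K+\sigma^2 I]^{-1}y$ and $K_{\mathrm{tt}}-K_{\mathrm t}[K+\sigma^2 I]^{-1}K_{\mathrm t}^\intercal$, exactly as in the probabilistic GP case. The point that needs care is conceptual rather than computational, namely the reduction of the random-error model to the systematic-error one via the cancellation of the Gaussian normalising constant, which is precisely what makes the two noise models lead to identical predictive equations.
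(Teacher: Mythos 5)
Your proof is correct, and the systematic-error half follows essentially the same route as the paper: form the joint Gaussian possibility function describing $(\bm{g}_{\mathrm t},\bm{y})$, compute its blocks via the covariance rules of Proposition~\ref{cov properties}, and apply the Gaussian conditioning identity (the paper first motivates the covariance structure through the weight-space model $\bm{g}(x)=\phi(x)^\top\bm{w}$ before generalising to an arbitrary kernel, but the conditioning step is identical). Where you genuinely diverge is the random-error case. The paper proceeds computationally: it applies the possibilistic Bayes rule to obtain the posterior $\overline{\mathrm{N}}\bigl(\sigma^{-2}\Sigma y,(\sigma^{-2}I+K^{-1})^{-1}\bigr)$ over $\bm{g}$ at the training inputs, then resolves the supremum $\sup_g f(g_{\mathrm t}\mid X,X_{\mathrm t},g)f(g\mid X,y)$ via a Kalman-filter-type identity and observes that the outcome matches the systematic case. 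You instead note that the Gaussian likelihood equals the Gaussian possibility function up to the constant $(2\pi\sigma^2)^{-n/2}$, which cancels in the sup-normalisation of the possibilistic Bayes rule, so the joint posterior possibility function is \emph{identical} to the systematic-error one before any further computation; marginalisation then finishes the job. Your argument is cleaner and more explanatory --- it shows \emph{why} the two noise models coincide rather than verifying it after the fact --- and it has the side benefit of never requiring $K^{-1}$, whereas the paper's intermediate expressions do. The paper's route, in exchange, exhibits the intermediate posterior over the latent values at the training points, which is reused in the classification sections.
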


The following propositions deal with classification, for which we focus on the case of a random error. We follow the standard approach of discriminative GP classification and recover the standard results for our PGP by using the Laplace approximation, as discussed in Appendix~\ref{sec: Laplace Appendix}.

\begin{proposition}[Binary classification]
\label{prop:GP pred binary classification}
Let $L = \{-1, 1\}$ be the set of classes and let the probability of each class at input $X$ be determined by a sigmoid function $\sigma$ as $\sigma(\bm{g}(X))= p(Y=1|X)=1-p(Y=-1|X)$, with $\bm{g}$ an unknown latent function. Under a Laplace approximation, the posterior $f(g_{\mathrm{t}} \vert X,y, X_{\mathrm{t}})$ is approximated by the Gaussian possibility function $\overline{\mathrm{N}}(g_{\mathrm{t}}; \mu_{\mathrm{t}}, \Sigma_{\mathrm{t}})$ with parameters $\mu_{\mathrm{t}} = K_{\mathrm{t}}\nabla\log p(Y|\hat{g})$ and $\Sigma_{\mathrm{t}} = K_{\mathrm{t}\mathrm{t}}-K_{\mathrm{t}}[K+W^{-1}]^{-1}K_{\mathrm{t}}^\intercal$, with $\hat{g}$ the mode of the true posterior possibility function and with $W$ the negative Hessian of the log-likelihood.
\end{proposition}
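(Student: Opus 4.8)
The plan is to transcribe the classical Laplace-approximation derivation for discriminative GP classification \citep{williams2006gaussian} into the possibilistic calculus: the inference step uses the OPM conditioning rule of Section~\ref{sec:possibilityTheory}, the prediction step uses the conditioning and marginalisation formulas for Gaussian possibility functions, and the final predictive equations are obtained from the intermediate ones by two elementary simplifications.

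First I would restrict the prior PGP, taken with zero expected function and covariance function $K$, to the inputs $X$ and $X_{\mathrm t}$: by definition of a PGP, $(\bm g(X), \bm g(X_{\mathrm t}))$ is described by the Gaussian possibility function with zero mean and covariance blocks $K$, $K_{\mathrm t}^\intercal$, $K_{\mathrm t}$, $K_{\mathrm t\mathrm t}$. Since the label likelihood $p(Y \mid \bm g) = \prod_i \sigma(y_i \bm g(X_i))$ depends on $\bm g$ only through the training latents, the possibilistic Bayes rule of Section~\ref{sec:possibilityTheory} gives that the posterior possibility function describing $\bm g(X)$ given $y$ is proportional to $\Psi(g) \defeq p(Y\mid g)\,\overline{\mathrm N}(g; 0, K)$, while the prior conditional $f_{\bm g_{\mathrm t}\mid\bm g}$ — unaffected by the $g_{\mathrm t}$-free likelihood — remains the Gaussian possibility function $\overline{\mathrm N}(g_{\mathrm t}; K_{\mathrm t}K^{-1}g,\, K_{\mathrm t\mathrm t} - K_{\mathrm t}K^{-1}K_{\mathrm t}^\intercal)$ obtained from the conditioning formula of Section~\ref{sec:possibilityTheory}.

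Next comes the Laplace step. Invoking the fact, noted in Section~\ref{covariance-section} and detailed in Appendix~\ref{sec: Laplace Appendix}, that the Laplace approximation is a moment-matching operation in possibility theory — matching the mode $\mathbb E^\star$ and the observed information $\pre^\star$ — I approximate the posterior on $\bm g(X)$ by $\overline{\mathrm N}(g; \hat g, (K^{-1}+W)^{-1})$, where $\hat g = \argmax_g \log\Psi(g)$ and $W = -\nabla\nabla\log p(Y\mid\hat g)$ is the diagonal negative Hessian of the log-likelihood, so that $-\nabla\nabla\log\Psi(\hat g) = K^{-1}+W$. To obtain $\bm g_{\mathrm t}$ given $y$, I then marginalise out $g$ (via the supremum) from the product $f_{\bm g_{\mathrm t}\mid\bm g}(g_{\mathrm t}\mid g)\,\overline{\mathrm N}(g;\hat g,(K^{-1}+W)^{-1})$. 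Both factors being Gaussian possibility functions, the product is a joint Gaussian possibility function in $(g,g_{\mathrm t})$ with mode $(\hat g, K_{\mathrm t}K^{-1}\hat g)$, and by the marginalisation property for Gaussian possibility functions its $g_{\mathrm t}$-marginal is $\overline{\mathrm N}(g_{\mathrm t}; \mu_{\mathrm t}, \Sigma_{\mathrm t})$ with $\mu_{\mathrm t} = K_{\mathrm t}K^{-1}\hat g$ and $\Sigma_{\mathrm t} = K_{\mathrm t\mathrm t} - K_{\mathrm t}K^{-1}K_{\mathrm t}^\intercal + K_{\mathrm t}K^{-1}(K^{-1}+W)^{-1}K^{-1}K_{\mathrm t}^\intercal$. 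Finally, stationarity of $\log\Psi$ at $\hat g$ reads $\nabla\log p(Y\mid\hat g) = K^{-1}\hat g$, giving $\mu_{\mathrm t} = K_{\mathrm t}\nabla\log p(Y\mid\hat g)$, and the identity $K^{-1} - K^{-1}(K^{-1}+W)^{-1}K^{-1} = (K+W^{-1})^{-1}$ (a rearrangement of Woodbury) collapses the covariance to $\Sigma_{\mathrm t} = K_{\mathrm t\mathrm t} - K_{\mathrm t}(K+W^{-1})^{-1}K_{\mathrm t}^\intercal$.

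I expect the matrix manipulations and the stationarity identity to be routine; the one step that genuinely requires care — and where the argument departs from the probabilistic one — is the prediction step, namely that marginalising a product of Gaussian possibility functions via the supremum reproduces exactly the covariance-block formula of ordinary Gaussian marginalisation (the minimisation of a Schur-complemented quadratic form must agree with the block of the inverse precision). This is precisely the content of the marginalisation property for Gaussian possibility functions recalled in Section~\ref{sec:possibilityTheory}, so the obstacle is really one of carefully stating and applying that property rather than a new computation. A secondary point to pin down is the precise meaning of the possibilistic Laplace approximation, so that the approximating object is unambiguously the Gaussian possibility function $\overline{\mathrm N}(\hat g, (K^{-1}+W)^{-1})$, with the mode as expected value and $K^{-1}+W$ as precision, rather than a renormalisation of the probabilistic Laplace approximation.
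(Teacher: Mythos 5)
Your proposal is correct and follows essentially the same route as the paper: the same decomposition $f(g_{\mathrm{t}}\vert X,y,X_{\mathrm{t}}) = \sup_{g} f(g_{\mathrm{t}}\vert X,X_{\mathrm{t}},g)\,f(g\vert X,Y)$, the same possibilistic Bayes rule and Laplace approximation yielding $\overline{\mathrm{N}}(g;\hat g,(K^{-1}+W)^{-1})$ with the stationarity condition $\hat g = K\nabla\log p(Y\vert\hat g)$, and the same final predictive equations. The only cosmetic difference is that you carry out the sup-marginalisation of the Gaussian product and the Woodbury collapse explicitly, where the paper delegates that step to the Kalman-filter identity of \cite{houssineau2018smoothing}.
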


In the following result, we use a generalised version of the shorthand notations introduced in Propositions~\ref{prop:GP pred regression} and \ref{prop:GP pred binary classification} for the kernel $K$ of multiple PGPs and for the parameter $\hat{g}$ in the Laplace approximation.

\begin{proposition}[Multiclass classification]
\label{prop:GP pred multiclass classification}
Let $L$ be the set of classes and let the probability of the classes in $L$ at input $X$ be characterised by the softmax of a set of latent function $\{\bm{g}_l\}_{l \in L}$, with $\bm{g}_l$ described by a PGP with kernel $K_l$ and with $\bm{g}_l$ independent of $\bm{g}_{l'}$ for any $l,l' \in L$ such that $l \neq l'$. Under a Laplace approximation, the posterior $f(g_{l,\mathrm{t}} \vert X,y, X_{\mathrm{t}})$ is approximated by the Gaussian possibility function $\overline{\mathrm{N}}(g_{l,\mathrm{t}}; \mu_{l,\mathrm{t}}, \Sigma_{l,\mathrm{t}})$ with parameters
$\mu_{l,\mathrm{t}} = (K_{l,\mathrm{t}})^\intercal (K_l)^{-1}\hat{g}_l$
and
$\Sigma_{l,\mathrm{t}}$ is the l-th block of $\Sigma_{\mathrm{t}} = \mathrm{diag}(k(X_t, X_t)) - Q_{\mathrm{t}} ^\top (K + W^{-1})^{-1} Q_{\mathrm{t}}
$
\end{proposition}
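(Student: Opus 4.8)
The plan is to follow the same route as the binary case (Proposition~\ref{prop:GP pred binary classification}), but applied blockwise across the $|L|$ independent latent PGPs, and then to invoke the Laplace approximation machinery of Appendix~\ref{sec: Laplace Appendix}. First I would stack the latent uncertain variables: write $\bm{g} = (\bm{g}_l)_{l \in L}$ evaluated at the training points $X$, so that the joint prior possibility function is $\overline{\mathrm{N}}(0, K)$ with $K = \mathrm{diag}(K_1, \dots, K_{|L|})$ block-diagonal, by the independence assumption $\bm{g}_l \perp \bm{g}_{l'}$ together with the product form of independent Gaussian possibility functions. The likelihood $p(y \mid g)$ is the product over training points of softmax probabilities; crucially it depends on all blocks jointly, so the posterior possibility function $f_{\bm{g}\mid y}(g) \propto p(y \mid g)\,\overline{\mathrm{N}}(g; 0, K)$ is no longer of product form. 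This is where possibility theory helps: by the inference formula in Section~\ref{sec:possibilityTheory} (Inference with OPMs), the posterior is the renormalised product, and its mode $\hat{g}$ is the maximiser of $\log p(y\mid g) - \tfrac12 g^\top K^{-1} g$, found by Newton's method exactly as in the probabilistic treatment of Williams--Rasmussen.

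Next I would apply the Laplace approximation, which — by the remark in Section~\ref{covariance-section} that ``the Laplace approximation becomes a moment matching method in the context of possibility theory'' — means we replace $f_{\bm{g}\mid y}$ by the Gaussian possibility function whose mode is $\hat g$ and whose precision matrix is the negative Hessian of $\log f_{\bm{g}\mid y}$ at $\hat g$, namely $K^{-1} + W$ where $W = -\nabla\nabla \log p(y\mid \hat g)$ is the (non-block-diagonal) negative Hessian of the log-softmax-likelihood. So $\bm{g}\mid y$ is described by $\overline{\mathrm{N}}(\hat g, (K^{-1}+W)^{-1})$. Then I would extend to the test points: by the PGP marginalisation/consistency property and the Gaussian conditioning formula from Section~2.2 ($\mu_{1|2}, \Sigma_{1|2}$ for normal possibility functions), the joint possibility function of $(\bm{g}, \bm{g}_{\mathrm{t}})$ conditioned on $\bm{y}$ is obtained by propagating through the block-diagonal cross-covariance $Q_{\mathrm{t}}$ between training and test latents (block-diagonal because the $\bm{g}_l$ are independent, with blocks $K_{l,\mathrm{t}}$). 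Marginalising out $\bm{g}$ and using the standard matrix-inversion-lemma rearrangement gives the predictive mean $\mu_{l,\mathrm{t}} = (K_{l,\mathrm{t}})^\top (K_l)^{-1}\hat g_l$ — here the blockwise structure of $Q_{\mathrm{t}}$ and of $K$ lets the mean decouple per class even though $\hat g$ itself was computed jointly — and the predictive covariance $\Sigma_{\mathrm{t}} = \mathrm{diag}(k(X_{\mathrm{t}},X_{\mathrm{t}})) - Q_{\mathrm{t}}^\top (K + W^{-1})^{-1} Q_{\mathrm{t}}$, of which $\Sigma_{l,\mathrm{t}}$ is the $l$-th diagonal block.

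The main obstacle I anticipate is bookkeeping rather than anything conceptually deep: one must be careful that the Laplace step is carried out on the \emph{joint} (non-factorised) posterior over all classes, since $W$ couples the blocks, and only afterwards does the block-diagonal structure of $K$ and $Q_{\mathrm{t}}$ allow the mean to be written per-class — the covariance does \emph{not} decouple, which is why $\Sigma_{\mathrm{t}}$ is stated as a full matrix from which one extracts the $l$-th block. A secondary subtlety is justifying that the matrix-inversion-lemma manipulations are valid when $K$ is only required to be positive semi-definite (recall the PGP allows $\Gamma = K^{-1}$ to be merely positive semi-definite); this is handled as in the probabilistic case by working with $K + W^{-1}$, which is positive definite provided $W \succ 0$ — true for the softmax likelihood away from degenerate label configurations — and by noting that, as flagged in the excerpt, the differences from the probabilistic derivation are exactly the ones detailed in Appendix~\ref{proofs:PGP}, so I would defer the positive-semidefinite edge cases there. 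Everything else is the verbatim multiclass-GP Laplace computation, which I would not reproduce in full.
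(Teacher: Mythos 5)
Your proposal follows essentially the same route as the paper's proof: a joint Laplace approximation over the stacked latents with block-diagonal prior $K$ (the class coupling entering only through the softmax Hessian $W$), followed by propagation to the test points through the block-diagonal cross-covariance $Q_{\mathrm{t}}$; the paper's stated mean $Q_{\mathrm{t}}^\top(y-\hat\pi)$ coincides with your per-class form because $\hat g = K(y-\hat\pi)$ with $K$ block-diagonal. The approach and the key observations (joint, non-factorised Laplace step; mean decouples per class while the covariance does not) match the paper's, so there is nothing further to compare.
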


\section{Experimental Results}
\label{sec:experimentalResults}

In this section, we present an experimental evaluation of the proposed active learning strategies. Detailed experimental settings and additional figures are provided in Appendix~\ref{experimental-settings}, and code for reproducing the results is available at: [URL]. In all experiments, a hot-start of one sample for each class is provided, and one time step corresponds to adding one additional sample, as selected by the considered acquisition function.

\begin{table*}
\small
\setlength{\tabcolsep}{2pt}
\centering
\caption{Performance results, in terms of accuracy at the final time step, on real-world and synthetic datasets. Best and second-best performances are in black bold and grey bold respectively.}
\label{tab:results}
\begin{tabular}{c|c|c|c|c|c|c}
\hline
 & Sonar & Wine & Breast Cancer & Ionosphere & Fri & Vehicle \\
 \hline
  & med. \scriptsize [Q1-Q3] & med. \scriptsize [Q1-Q3] & med. \scriptsize [Q1-Q3] & med. \scriptsize [Q1-Q3] & med. \scriptsize [Q1-Q3] & med. \scriptsize [Q1-Q3] \\
 \hline
 Rand. & .788 \scriptsize [.763, .825] & .969 \scriptsize [.956, .975]  & .950 \scriptsize [.943, .953] & .780 \scriptsize [.763, .793] & .827 \scriptsize [.800, .850] & .890 \scriptsize [.866, .905] \\ 
 \hline
 Stand.\ & \secBest{.800} \scriptsize [.763, .828] & \textbf{.988} \scriptsize [\textbf{.986}, \textbf{.994}] &  \secBest{.957} \scriptsize \secBest{[.950, .960]} & .830 \scriptsize [.813, .847] & .873 \scriptsize [.859, .888] &  \textbf{.950} \scriptsize \textbf{[.941, .956]}\\
 \hline
 Max. Ent. & .675 \scriptsize [.625, .703] & .981 \scriptsize [.975, .988] &  .950 \scriptsize [.943, .953] & .802 \scriptsize [.786, .803] & .855 \scriptsize [.832, .868] &  .906 \scriptsize [.891, .919]\\
 \hline
 BALD & .788 \scriptsize [.763, .813] & .975 \scriptsize [.963, .981] &  .950 \scriptsize [.943, .953] & .810 \scriptsize [.790, .820] & .827 \scriptsize [.800, .847] & .893 \scriptsize [.876,  .911] \\
 \hline
 $U^L_{\mathrm{bin}}$ &  \textbf{.825} \scriptsize \textbf{[.800, .850]} &  \textbf{.988} \scriptsize [\secBest{.981}, \textbf{.994}] & .953 \scriptsize [.947, .957] & \textbf{.839} \scriptsize \textbf{[.821, .853]} & \secBest{.877} \scriptsize \secBest{[.861, .890]} & .931 \scriptsize [.921, .939] \\
 \hline
 $N_{\mathrm{bin}}$ & \secBest{.800} \scriptsize \secBest{[.775, .838]} & \textbf{.988}  \scriptsize [\secBest{.981}, .988] & \textbf{.960} \scriptsize \textbf{[.957,.963]} & \secBest{.837} \scriptsize \secBest{[.820, .850]} & \textbf{.886} \scriptsize \textbf{[.873, .895]} & \textbf{.950} \scriptsize [\secBest{.939},\textbf{.956}]\\
 \hline
\\
 \hline
 & Block in Center & Block in Corner & & Thyroid & Iris & Seismic \\
 \hline
 & med. \scriptsize [Q1-Q3] & med. \scriptsize [Q1-Q3] & & med. \scriptsize [Q1-Q3] &  med. \scriptsize [Q1-Q3] & med. \scriptsize [Q1-Q3]  \\
 \hline
 Rand. & .660 \scriptsize [.586, .781] & .758 \scriptsize [.611, .824] &
 Rand. & .905 \scriptsize [.873, .930] & .915 \scriptsize [.890, .940] & .878 \scriptsize [.848, .908] \\ 
 \hline
 BALD & \secBest{.925} \scriptsize \secBest{[.910, .940]} & \textbf{.930} \scriptsize [\secBest{.910}, \textbf{.941}] &
 Least Conf. & .930 \scriptsize [.920, \secBest{.950}] & \secBest{.940} \scriptsize [.930, \secBest{.958}] & \secBest{.915} \scriptsize [\secBest{.896}, \secBest{.929}] \\
 \hline
 Stand.\ & .920 \scriptsize [.905, .925] & \textbf{.930} \scriptsize \secBest{[.910, .940]} &
 Margin-based & .920 \scriptsize [.890, .930] & .910 \scriptsize [.883, .928] & .878 \scriptsize [.856, .899] \\
 \hline
 $U^L_{\mathrm{bin}}$ & \textbf{.935} \scriptsize \textbf{[.916, .955]} &  .923 \scriptsize [\textbf{.918}, .935] &
 $U^L_{\mathrm{multi}}$ & \textbf{.950} \scriptsize \textbf{[.930, .970]} & \textbf{.960} \scriptsize \textbf{[.945, .970]} & \textbf{.925} \scriptsize \textbf{[.904, .934]} \\
 \hline
 $N_{\mathrm{bin}}$ & \secBest{.925} \scriptsize [.905, .935] & \textbf{.930} \scriptsize [.900, \secBest{.940}] &
 $N_{\mathrm{multi}}$ & \secBest{.940} \scriptsize [\textbf{.930}, \secBest{.950}] & \secBest{.940} \scriptsize \secBest{[.933, .958]} & \secBest{.915} \scriptsize [.890, \secBest{.929}] \\
 \hline
  &  &  &Entropy & .935 \scriptsize [.913, \secBest{.950}] & .930 \scriptsize [.920, .950] & .910 \scriptsize [.890, .920]\\
\end{tabular}
\end{table*}

\begin{table}
\small
\setlength{\tabcolsep}{2pt}
\centering
\caption{Average rank in terms of accuracy on all binary-classification datasets. Best and second-best performances are in black bold and grey bold respectively. }
\label{tab:resultsRank}
\begin{tabular}{c|c|c|c|c|c|c}
& Rand. & Stand.\ & Max.\ Ent. & BALD & $U^L_{\mathrm{bin}}$ & $N_{\mathrm{bin}}$ \\
\hline
Final step & 5.12 & 2.12 & 4.50 & 3.75 & \secBest{2.00} & \textbf{1.37} \\
\hline
AUC & 5.25 & \secBest{2.25} & 3.50 & 5.04 & 2.29 & \textbf{1.95}
\end{tabular}
\end{table}

\paragraph{Baselines} To investigate the performance of the proposed strategies, we benchmark them against existing local and update-free acquisitions functions including: random, BALD, maximum entropy of the latent function (``Max.\ Ent.''), least confidence (``Least conf.''), margin-based, and entropy. For binary classification, the least-confidence, margin-based and entropy-based acquisition functions are equivalent and are simply referred to as the standard (``Stand.'')\ acquisition function. All the baselines use a standard GP as model. Further details of the baseline acquisition strategies are included in Appendix~\ref{sec: baseline strategies}.

\paragraph{Datasets} We test the proposed strategies on 11 datasets, including 6 binary-classification and 3 multiclass real-world datasets, which are available via the OpenML dataset repository \citep{vanschoren2014openml}, and 2 synthetic binary-classification datasets, similar to those in \cite{houlsby2011bayesian}. The first synthetic dataset is ``Block in Center'', which has a large dense block of uninformative points on the decision boundary, and the second one is ``Block in Corner'', which has a large block of uninformative points far from the decision boundary.

\paragraph{Results} The performance, in terms of accuracy at the final time step ($n=50$), is reported in Table~\ref{tab:results} for each method and each dataset. Full details of the datasets and experimental settings are available in Appendix~\ref{experimental-settings}. The binary and multiclass experiments were each run 200 and 50 times respectively. Each run of an experiment was completed with a different subset of the dataset as the unlabelled pool to make sure the true performance of the approaches on such datasets is captured. Table~\ref{tab:results} shows that both our acquisition functions are most often either best or second best, both in multiclass and binary classification problems. BALD performs well on the synthetic datasets, but its performance is subpar on real problems. To highlight that our acquisition functions perform consistently across episodes, and to diversify our performance metric, we also consider in Table~\ref{tab:resultsRank} the average rank both in terms of the accuracy at the final time step and of the area under the curve (AUC) in accuracy vs.\ time step. We chose the former as our primary metric since the latter can be significantly influenced by the behaviour in early time steps. The results in Table~\ref{tab:resultsRank} show that $N_{\mathrm{bin}}$ is most often the best in both metrics while $U^L_{\mathrm{bin}}$ is second best in terms of final accuracy and is close to the second best in terms of AUC. The acquisition functions $N_{\mathrm{bin}}$ is also particularly computationally efficient, being only 12\% slower to evaluate than the standard acquisition functions and 9\% faster than BALD.

 \section{Conclusion}
 \label{sec:conclusion}
 
In this paper, we have proposed two ways to specifically measure epistemic uncertainty by using a framework based on possibility theory. We have demonstrated that these measures of epistemic uncertainty exhibit favourable theoretical properties and shown how they can be leveraged for Bayesian active learning. Our empirical evaluations, conducted through the development of a novel possibilistic Gaussian process framework, demonstrated the efficacy of these strategies in both multiclass and binary classification tasks on a variety of synthetic and real-world datasets. This work contributes to the ongoing efforts in the field of active learning by providing novel techniques for uncertainty quantification and reduction. In future work, we hope to address several limitations of the proposed approach:
\begin{enumerate*}[label=\roman*)]
    \item to develop a method for calculating $U^{\mathsf{X}}$ when $\mathsf{X}$ is uncountable and analytically intractable, e.g., based on importance sampling,
    \item to generalise the necessity of the correct classification to other settings, and
    \item to develop additional Bayesian models in the possibilistic framework, to enable our approach to be applied in a broader variety of domains.
\end{enumerate*}

\bibliographystyle{apalike}
\bibliography{thebiblio}

\begin{thebibliography}{}

\bibitem[Aeberhard et~al., 1994]{aeberhard1994comparative}
Aeberhard, S., Coomans, D., and De~Vel, O. (1994).
\newblock Comparative analysis of statistical pattern recognition methods in
  high dimensional settings.
\newblock {\em Pattern Recognition}, 27(8):1065--1077.

\bibitem[Caprio et~al., 2023]{caprio2023novel}
Caprio, M., Sale, Y., H{\"u}llermeier, E., and Lee, I. (2023).
\newblock A novel {B}ayes’ theorem for upper probabilities.
\newblock In {\em International Workshop on Epistemic Uncertainty in Artificial
  Intelligence}, pages 1--12. Springer.

\bibitem[Chen et~al., 2021]{chen2021observer}
Chen, Z., Ristic, B., Houssineau, J., and Kim, D.~Y. (2021).
\newblock Observer control for bearings-only tracking using possibility
  functions.
\newblock {\em Automatica}, 133:109888.

\bibitem[Dawid et~al., 1973]{dawid1973marginalization}
Dawid, A.~P., Stone, M., and Zidek, J.~V. (1973).
\newblock Marginalization paradoxes in bayesian and structural inference.
\newblock {\em Journal of the Royal Statistical Society Series B: Statistical
  Methodology}, 35(2):189--213.

\bibitem[De~Cooman, 2001]{de2001integration}
De~Cooman, G. (2001).
\newblock Integration and conditioning in numerical possibility theory.
\newblock {\em Annals of Mathematics and Artificial Intelligence}, 32:87--123.

\bibitem[Den{\oe}ux, 2023]{denoeux2023reasoning}
Den{\oe}ux, T. (2023).
\newblock Reasoning with fuzzy and uncertain evidence using epistemic random
  fuzzy sets: General framework and practical models.
\newblock {\em Fuzzy Sets and Systems}, 453:1--36.

\bibitem[Dubois et~al., 2000]{dubois2000possibility}
Dubois, D., Nguyen, H.~T., and Prade, H. (2000).
\newblock Possibility theory, probability and fuzzy sets misunderstandings,
  bridges and gaps.
\newblock In {\em Fundamentals of fuzzy sets}, pages 343--438. Springer.

\bibitem[Dubois and Prade, 2015]{dubois2015possibility}
Dubois, D. and Prade, H. (2015).
\newblock Possibility theory and its applications: Where do we stand?
\newblock {\em Springer handbook of computational intelligence}, pages 31--60.

\bibitem[Fisher, 1935]{fisher1935fiducial}
Fisher, R.~A. (1935).
\newblock The fiducial argument in statistical inference.
\newblock {\em Annals of eugenics}, 6(4):391--398.

\bibitem[Fisher, 1988]{misc_iris_53}
Fisher, R.~A. (1988).
\newblock {Iris}.
\newblock UCI Machine Learning Repository.
\newblock {DOI}: https://doi.org/10.24432/C56C76.

\bibitem[Friedman, 2002]{friedman2002stochastic}
Friedman, J.~H. (2002).
\newblock Stochastic gradient boosting.
\newblock {\em Computational statistics \& data analysis}, 38(4):367--378.

\bibitem[Gardner et~al., 2018]{gardner2018gpytorch}
Gardner, J., Pleiss, G., Weinberger, K.~Q., Bindel, D., and Wilson, A.~G.
  (2018).
\newblock Gpytorch: Blackbox matrix-matrix gaussian process inference with gpu
  acceleration.
\newblock {\em Advances in neural information processing systems}, 31.

\bibitem[Gorman and Sejnowski, 1988]{gorman1988analysis}
Gorman, R.~P. and Sejnowski, T.~J. (1988).
\newblock Analysis of hidden units in a layered network trained to classify
  sonar targets.
\newblock {\em Neural networks}, 1(1):75--89.

\bibitem[Houlsby et~al., 2011]{houlsby2011bayesian}
Houlsby, N., Husz{\'a}r, F., Ghahramani, Z., and Lengyel, M. (2011).
\newblock Bayesian active learning for classification and preference learning.
\newblock {\em arXiv preprint arXiv:1112.5745}.

\bibitem[Houssineau, 2018]{houssineau2018parameter}
Houssineau, J. (2018).
\newblock Parameter estimation with a class of outer probability measures.
\newblock {\em arXiv preprint arXiv:1801.00569}.

\bibitem[Houssineau, 2021]{houssineau2021linear}
Houssineau, J. (2021).
\newblock A linear algorithm for multi-target tracking in the context of
  possibility theory.
\newblock {\em IEEE Transactions on Signal Processing}, 69:2740--2751.

\bibitem[Houssineau and Bishop, 2018]{houssineau2018smoothing}
Houssineau, J. and Bishop, A.~N. (2018).
\newblock Smoothing and filtering with a class of outer measures.
\newblock {\em SIAM/ASA Journal on Uncertainty Quantification}, 6(2):845--866.

\bibitem[Houssineau et~al., 2019]{houssineau2019elements}
Houssineau, J., Chada, N., and Delande, E. (2019).
\newblock Elements of asymptotic theory with outer probability measures.
\newblock {\em arXiv preprint arXiv:1908.04331}.

\bibitem[H{\"u}llermeier and Waegeman, 2021]{hullermeier2021aleatoric}
H{\"u}llermeier, E. and Waegeman, W. (2021).
\newblock Aleatoric and epistemic uncertainty in machine learning: An
  introduction to concepts and methods.
\newblock {\em Machine learning}, 110(3):457--506.

\bibitem[MacKay, 1992]{mackay1992information}
MacKay, D.~J. (1992).
\newblock Information-based objective functions for active data selection.
\newblock {\em Neural computation}, 4(4):590--604.

\bibitem[Mangasarian and Wolberg, 1990]{mangasarian1990cancer}
Mangasarian, O.~L. and Wolberg, W.~H. (1990).
\newblock Cancer diagnosis via linear programming.
\newblock Technical report, University of Wisconsin-Madison Department of
  Computer Sciences.

\bibitem[Nguyen et~al., 2019]{nguyen2019epistemic}
Nguyen, V.-L., Destercke, S., and H{\"u}llermeier, E. (2019).
\newblock Epistemic uncertainty sampling.
\newblock In {\em Discovery Science: 22nd International Conference, DS 2019,
  Split, Croatia, October 28--30, 2019, Proceedings 22}, pages 72--86.
  Springer.

\bibitem[Nguyen et~al., 2022]{nguyen2022measure}
Nguyen, V.-L., Shaker, M.~H., and H{\"u}llermeier, E. (2022).
\newblock How to measure uncertainty in uncertainty sampling for active
  learning.
\newblock {\em Machine Learning}, 111(1):89--122.

\bibitem[Quinlan, 1987]{misc_thyroid_disease_102}
Quinlan, R. (1987).
\newblock {Thyroid Disease}.
\newblock UCI Machine Learning Repository.
\newblock {DOI}: https://doi.org/10.24432/C5D010.

\bibitem[Rodrigues et~al., 2014]{rodrigues2014gaussian}
Rodrigues, F., Pereira, F., and Ribeiro, B. (2014).
\newblock Gaussian process classification and active learning with multiple
  annotators.
\newblock In {\em International conference on machine learning}, pages
  433--441. PMLR.

\bibitem[Sale et~al., 2023]{sale2023second}
Sale, Y., Bengs, V., Caprio, M., and H{\"u}llermeier, E. (2023).
\newblock Second-order uncertainty quantification: A distance-based approach.
\newblock In {\em Forty-first International Conference on Machine Learning}.

\bibitem[Sebastiani and Wynn, 2000]{sebastiani2000maximum}
Sebastiani, P. and Wynn, H.~P. (2000).
\newblock Maximum entropy sampling and optimal {B}ayesian experimental design.
\newblock {\em Journal of the Royal Statistical Society: Series B (Statistical
  Methodology)}, 62(1):145--157.

\bibitem[Siebert, 1987]{siebert1987vehicle}
Siebert, J.~P. (1987).
\newblock {\em Vehicle recognition using rule based methods}.
\newblock Turing Institute.

\bibitem[Sigillito et~al., 1989]{sigillito1989classification}
Sigillito, V.~G., Wing, S.~P., Hutton, L.~V., and Baker, K.~B. (1989).
\newblock Classification of radar returns from the ionosphere using neural
  networks.
\newblock {\em Johns Hopkins APL Technical Digest}, 10(3):262--266.

\bibitem[Sikora and Wrobel, 2013]{misc_seismic-bumps_266}
Sikora, M. and Wrobel, L. (2013).
\newblock {seismic-bumps}.
\newblock UCI Machine Learning Repository.
\newblock {DOI}: https://doi.org/10.24432/C5W902.

\bibitem[Smith et~al., 2023]{smith2023prediction}
Smith, F.~B., Kirsch, A., Farquhar, S., Gal, Y., Foster, A., and Rainforth, T.
  (2023).
\newblock Prediction-oriented bayesian active learning.
\newblock In {\em International Conference on Artificial Intelligence and
  Statistics}, pages 7331--7348. PMLR.

\bibitem[Tan et~al., 2021]{tan2021diversity}
Tan, W., Du, L., and Buntine, W. (2021).
\newblock Diversity enhanced active learning with strictly proper scoring
  rules.
\newblock {\em Advances in Neural Information Processing Systems},
  34:10906--10918.

\bibitem[Tan et~al., 2023]{tan2023bayesian}
Tan, W., Du, L., and Buntine, W. (2023).
\newblock Bayesian estimate of mean proper scores for diversity-enhanced active
  learning.
\newblock {\em IEEE Transactions on Pattern Analysis and Machine Intelligence}.

\bibitem[Vanschoren et~al., 2014]{vanschoren2014openml}
Vanschoren, J., Van~Rijn, J.~N., Bischl, B., and Torgo, L. (2014).
\newblock {OpenML}: networked science in machine learning.
\newblock {\em ACM SIGKDD Explorations Newsletter}, 15(2):49--60.

\bibitem[Williams and Rasmussen, 2006]{williams2006gaussian}
Williams, C.~K. and Rasmussen, C.~E. (2006).
\newblock {\em Gaussian processes for machine learning}, volume~2.
\newblock MIT press Cambridge, MA.

\bibitem[Wimmer et~al., 2023]{wimmer2023quantifying}
Wimmer, L., Sale, Y., Hofman, P., Bischl, B., and H{\"u}llermeier, E. (2023).
\newblock Quantifying aleatoric and epistemic uncertainty in machine learning:
  Are conditional entropy and mutual information appropriate measures?
\newblock In {\em Uncertainty in Artificial Intelligence}, pages 2282--2292.
  PMLR.

\bibitem[Zadeh, 1978]{zadeh1978fuzzy}
Zadeh, L.~A. (1978).
\newblock Fuzzy sets as a basis for a theory of possibility.
\newblock {\em Fuzzy sets and systems}, 1(1):3--28.

\bibitem[Zhao et~al., 2021a]{zhao2021efficient}
Zhao, G., Dougherty, E., Yoon, B.-J., Alexander, F., and Qian, X. (2021a).
\newblock Efficient active learning for {G}aussian process classification by
  error reduction.
\newblock {\em Advances in Neural Information Processing Systems},
  34:9734--9746.

\bibitem[Zhao et~al., 2021b]{zhao2021uncertainty}
Zhao, G., Dougherty, E., Yoon, B.-J., Alexander, F., and Qian, X. (2021b).
\newblock Uncertainty-aware active learning for optimal {B}ayesian classifier.
\newblock In {\em International conference on learning representations (ICLR
  2021)}.

\end{thebibliography}

\clearpage
\onecolumn

\appendix
\section{Properties of the Possibilistic Covariance} \label{prop of poss-covar}

This appendix contains the proofs of the statements in Proposition \ref{cov properties} and the consistency criterion as stated in Section~\ref{covariance-section}. 

We state Proposition~\ref{cov properties} once more here for ease of reference.

\setcounter{proposition}{4}
\begin{proposition} For all $i,j,k \in \lbrace 1,\ldots , n \rbrace$, $\alpha,\beta \in \mathbb{R} $ and $F \in \mathbb{R}^{N\times n}$, it holds that
\begin{subequations}
\begin{align}
    & \cov^\star(\bm{\theta}_i,\bm{\theta}_i) = \mathbb{V}^\star(\bm{\theta}_i)  \label{covar 1st} \\
    & \cov^\star(\bm{\theta}_i,\bm{\theta}_j) = \cov^\star(\bm{\theta}_j,\bm{\theta}_i) \label{covar 2nd} \\
    & \cov^\star (\bm{\theta}_i + \alpha,\bm{\theta}_j + \beta) = \cov^\star (\bm{\theta}_i,\bm{\theta}_j) \label{covar 3rd} \\
    & \cov^\star(\alpha \bm{\theta}_i,\beta\bm{\theta}_j)= \alpha\beta \cov^\star( \bm{\theta}_i,\bm{\theta}_j) \label{covar 4th} \\
    & \cov^\star(\bm{\theta}_i+\bm{\theta}_j,\bm{\theta}_k) =  \cov^\star(\bm{\theta}_i,\bm{\theta}_k) +  \cov^\star(\bm{\theta}_j,\bm{\theta}_k) \label{covar 5th} \\
    & \cov^\star(\bm{\theta}_i, \bm{\theta}_j+\bm{\theta}_k) = \cov^\star(\bm{\theta}_i,\bm{\theta}_j) + \cov^\star(\bm{\theta}_i,\bm{\theta}_k) \label{covar 6th} \\
    & \cov^\star(F\bm{\theta}) = F^\top \cov^\star(\bm{\theta}) F \label{covar 7th}
\end{align}
\end{subequations}
Further, if $\bm{\theta}_i$ and $\bm{\theta}_j$ are independently described, then $\cov^\star (\bm{\theta}_i,\bm{\theta}_j) = 0$.
\end{proposition}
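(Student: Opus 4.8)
The strategy is to deduce every identity from two ingredients: a change-of-variables rule for invertible affine reparametrisations, and the \emph{consistency criterion} itself, which identifies the covariance extracted from a marginal possibility function with a sub-block of the full covariance matrix. Throughout, write $\hat\theta = \mathbb{E}^\star(\bm\theta)$ for the (assumed unique, interior) mode of $f_{\bm\theta}$ and $H = \pre^\star(\bm\theta)$ for the Hessian of $-\log f_{\bm\theta}$ at $\hat\theta$, so that $\cov^\star(\bm\theta) = H^{-1}$. Since $f_{\bm\theta}$ is assumed twice continuously differentiable, Schwarz's theorem makes $H$ symmetric, and uni-modality makes it positive definite at the interior maximum, so that $H^{-1}$ exists; symmetry of $H^{-1}$ is exactly \eqref{covar 2nd}.

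\emph{Change of variables.} For an invertible affine map $T(\theta)=A\theta+b$, the change-of-variable formula gives $f_{T(\bm\theta)}(\phi)=f_{\bm\theta}(A^{-1}(\phi-b))$, with mode $A\hat\theta+b$; the chain rule then gives $\pre^\star(T(\bm\theta))=A^{-\top}HA^{-1}$ and hence $\cov^\star(T(\bm\theta))=A\cov^\star(\bm\theta)A^\top$. Specialising $A=I$ (pure translation) gives \eqref{covar 3rd}, and specialising $A$ to the diagonal matrix with entries $\alpha$ and $\beta$ in positions $i,j$ and $1$ elsewhere gives \eqref{covar 4th}; the degenerate cases $\alpha=0$ or $\beta=0$ are immediate, since the relevant component becomes deterministic.

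\emph{Consistency criterion.} This is the crux. Partition $\theta=(\theta_a,\theta_b)$ with $\theta_a=\theta_{1:k}$, set $\ell=\log f_{\bm\theta}$, and note that the marginal obeys $\log f_{\bm\theta_a}(\theta_a)=\sup_{\theta_b}\ell(\theta_a,\theta_b)=\ell(\theta_a,b^\star(\theta_a))$ with $b^\star(\theta_a)=\argmax_{\theta_b}\ell(\theta_a,\theta_b)$. Near $\hat\theta$ the inner maximiser satisfies the first-order condition $\nabla_b\ell(\theta_a,b^\star(\theta_a))=0$, and, since the $bb$-block of $H$ (a principal submatrix of a positive definite matrix) is nonsingular, the implicit function theorem makes $b^\star$ continuously differentiable with $\nabla b^\star=-[\nabla^2_{bb}\ell]^{-1}\nabla^2_{ba}\ell$. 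Differentiating $\log f_{\bm\theta_a}$ twice and cancelling the first-order contribution via the envelope condition yields $\pre^\star(\bm\theta_a)=H_{aa}-H_{ab}H_{bb}^{-1}H_{ba}$, the Schur complement of $H_{bb}$ in $H$; inverting and invoking the block-inverse identity $(H_{aa}-H_{ab}H_{bb}^{-1}H_{ba})^{-1}=[H^{-1}]_{aa}$ shows that the covariance computed from $f_{\bm\theta_a}$ is the leading $k\times k$ block of $\cov^\star(\bm\theta)$ (the mode of $f_{\bm\theta_a}$ is $\hat\theta_a$ because $\sup_{\theta_a}\sup_{\theta_b}f_{\bm\theta}=1$ is attained at $\hat\theta$). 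Taking $k=1$ with the single component $\bm\theta_i$ then recovers \eqref{covar 1st}, since the reciprocal of the scalar precision of $f_{\bm\theta_i}$ is by definition $\mathbb{V}^\star(\bm\theta_i)$. I expect this envelope/Schur-complement argument — in particular justifying the smoothness of $b^\star$ and the non-degeneracy of the Hessian blocks — to be the main obstacle; everything else is bookkeeping.

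\emph{Linear maps, bilinearity, independence.} For $F\in\mathbb{R}^{N\times n}$ of full row rank I would stack $F$ on top of a matrix $G$ so that the resulting square matrix $\tilde F$ is invertible; the change-of-variables rule gives $\cov^\star(\tilde F\bm\theta)=\tilde F\cov^\star(\bm\theta)\tilde F^\top$, and since $F\bm\theta$ is the marginal of $\tilde F\bm\theta$ onto its first $N$ components, the consistency criterion gives $\cov^\star(F\bm\theta)=F\cov^\star(\bm\theta)F^\top$, which is \eqref{covar 7th} (a rank-deficient $F$ reduces to this case on its image). Choosing $F$ to be the $2\times n$ matrix sending $\bm\theta$ to $(\bm\theta_i+\bm\theta_j,\bm\theta_k)$, respectively $(\bm\theta_i,\bm\theta_j+\bm\theta_k)$, and reading off the off-diagonal entry $(e_i+e_j)^\top\cov^\star(\bm\theta)e_k=\cov^\star(\bm\theta_i,\bm\theta_k)+\cov^\star(\bm\theta_j,\bm\theta_k)$ gives \eqref{covar 5th}--\eqref{covar 6th} (coinciding indices reduce to \eqref{covar 4th}). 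Finally, if $\bm\theta_i$ and $\bm\theta_j$ are independently described then the two-dimensional marginal factorises, $f_{\bm\theta_i\bm\theta_j}(\theta_i,\theta_j)=f_{\bm\theta_i}(\theta_i)f_{\bm\theta_j}(\theta_j)$, so $\log f_{\bm\theta_i\bm\theta_j}$ is separable, its mode is $(\hat\theta_i,\hat\theta_j)$, and its mixed second partial vanishes; hence the $2\times2$ precision of this marginal is diagonal, so is its inverse, and by the consistency criterion its off-diagonal entry $\cov^\star(\bm\theta_i,\bm\theta_j)$ equals $0$.
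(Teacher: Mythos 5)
Your proof is correct, and for the core identities it reaches the same destination as the paper --- the consistency criterion via the Schur complement $H_{aa}-H_{ab}H_{bb}^{-1}H_{ba}$ and the block-inverse identity --- but by a genuinely different route for the bilinearity and linear-map properties. The paper proves the consistency criterion and \eqref{covar 5th} by writing out a second-order Taylor expansion of $f_{\bm{\theta}}$ about the mode, explicitly solving for the critical point $h_1^c$ of the inner supremum, and substituting back; it then obtains \eqref{covar 7th} \emph{from} bilinearity by expanding $\cov^\star(\sum_k F_{ik}\bm{\theta}_k,\sum_l F_{jl}\bm{\theta}_l)$ entrywise. You reverse this order: an affine change-of-variables rule $\cov^\star(A\bm{\theta}+b)=A\cov^\star(\bm{\theta})A^\top$ plus the consistency criterion (applied after padding $F$ to an invertible square matrix) gives the linear-map property first, and \eqref{covar 5th}--\eqref{covar 6th} drop out as the $(1,2)$ entry for a specific $2\times n$ matrix. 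Your envelope/implicit-function-theorem derivation of the marginal precision is cleaner and more clearly justified than the paper's explicit critical-point computation, and your route additionally supplies a proof of the independence claim, which the paper states but never proves in its appendix. Note also that your formula $F\cov^\star(\bm{\theta})F^\top$ is the dimensionally correct one for $F\in\mathbb{R}^{N\times n}$; the proposition as printed has the transposes swapped, and the paper's own proof ends with $F\,\cov^\star(\bm{\theta})F^\top$, so you have silently corrected a typo rather than proved a different statement. Two edge cases are waved at rather than handled --- the degenerate scalings $\alpha=0$ or $\beta=0$ in \eqref{covar 4th}, and rank-deficient $F$ (including $i=j=k$ in \eqref{covar 5th}), where the image possibility function is supported on a proper affine subspace and the precision matrix in the normal directions is not defined by the paper's differentiability assumptions --- but the paper's proof is silent on exactly the same cases, so this is not a gap relative to the published argument.
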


Equations \ref{covar 1st}-\ref{covar 4th} follow immediately from the definition of covariance and variance, from the basic properties of derivatives and from the the consistency criterion.

\subsection{Proof of (\ref{covar 5th}) and (\ref{covar 6th})}\label{Proof Bilinear}

We assume $f_{\bm{\theta}}$ is twice continuously differentiable in each argument and uni-modal. Due to the consistency criterion and the symmetry stated in \eqref{covar 2nd} it is sufficient to consider the case of $\bm{\theta} = (\bm{\theta}_1,\bm{\theta}_2,\bm{\theta}_3)$ described by the possibility function $f_{\bm{\theta}}$. We define a new uncertain variable $\varphi = (\bm{\theta}_1 + \bm{\theta}_2,\bm{\theta}_3)$ which is described by the possibility function 
\[
    f_{\bm{\varphi}}(\theta_4,\theta_3)= \sup_{\theta_1 + \theta_2 = \theta_4}f_{\bm{\theta}}(\theta_1,\theta_2,\theta_3).
\]
Let $\mathbb{E}^\star(\bm{\theta}) \doteq \theta^\star \doteq (\theta_1^\star, \theta_2^\star,\theta_3^\star)$, so that the expected value of $\bm{\varphi}$ is $\mathbb{E}^\star(\bm{\varphi}) = (\theta_1^\star+ \theta_2^\star,\theta_3^\star)$. Further, define the components of the covariance matrix of $\bm{\theta}$ as
\[
\cov^\star(\bm{\theta}) \doteq
    \begin{bmatrix}
             c_{11} & c_{12} & c_{13}\\
             c_{12} & c_{22} & c_{23} \\
             c_{13} & c_{23} & c_{33}
         \end{bmatrix}.
\]

Since it holds that $f_{\bm{\theta}}(\theta^\star)=1$ and $f_{\bm{\theta}}^'(\theta^\star)=0$, the second-order Taylor expansion of $f_{\bm{\theta}}$ at $\theta^\star$ is
\[
    f_{\bm{\theta}}(\theta^\star + h) = 1+ h^\top \pre^\star(\bm{\theta})h + \mathcal{O}(||h||^3)
\]
with $h = (h_1,h_2,h_3)^{\top}$.

To find $f_{\bm{\varphi}}$ in terms of the Taylor expansion we consider $f_{\bm{\theta}}(\theta^\star + v)$ where $v = (h_1,h_4-h_1,h_3)$. Differentiating the Taylor expansion with respect to $h_1$, to find the critical points in terms of $(h_4,h_3)$, we obtain
\[
    h_1^c = \frac{
    -c_{11}c_{23}h_3 + c_{11}c_{33}h_4 + c_{12}c_{13}h_3 - c_{12}c_{23}h_3 +
    c_{12}c_{33}h_4 - c_{13}^2 h_4 + c_{13}c_{22}h_3 - c_{13}c_{23}h_4}
    {c_{11}c_{33}+2c_{12}c_{33}-c_{13}^2-2c_{13}c_{23}+c_{22}c_{23}-c_{23}^2}.
\]
Now setting $w = (h_1^c,h_4-h_1^c,h_3)$ in the Taylor expansion we obtain
\[
    f_{\bm{\theta}}(\theta^\star + w) = 1+ w^\top \pre^\star(\bm{\theta})w + \mathcal{O}(||w||^3).
\]
This allows us to calculate the covariance matrix of $\varphi$ 
\[
\cov^\star(\bm{\varphi}) = 
        \begin{bmatrix}
             c_{11} + 2c_{12} +c_{22} & c_{13} +c_{23}\\
             c_{13} + c_{23} & c_{33} 
         \end{bmatrix},
\]
which, from the (1,2) entry, leads to
\[
\cov^\star(\bm{\theta}_1+\bm{\theta}_2,\bm{\theta}_3) = \cov^\star(\bm{\theta}_1,\bm{\theta}_3) + \cov^\star(\bm{\theta}_2,\bm{\theta}_3),
\]
as required.

\subsection{Proof of (\ref{covar 7th})}

Let $F \in \mathbb{R}^{N \times n}$ for any integer $N > 0$ and let $\bm{\varphi} = (\bm{\varphi}_1 , \bm{\varphi}_2,\dots,\bm{\varphi}_N)^{\top} = F\bm{\theta}$. Given that $\bm{\varphi}_i = \sum_{k=1}^n F_{ik} \bm{\theta}_k$ and using the bilinearity of the  covariance function (Equations \ref{covar 4th}, \ref{covar 5th} and \ref{covar 6th}), it follows that
\begin{align*}
    \cov^\star(\bm{\varphi}_i,\bm{\varphi}_j) &= \cov^\star\left(\sum_{k=1}^n F_{ik} \bm{\theta}_k,\sum_{l=1}^n F_{jl} \bm{\theta}_l\right)\\
    &= \sum_{k=1}^n F_{ik} \cov^\star\left(  \bm{\theta}_k,\sum_{l=1}^n F_{jl} \bm{\theta}_l\right)\\
    &= \sum_{k=1}^n \sum_{l=1}^n F_{ik} F_{jl} \cov^\star\left(  \bm{\theta}_k,  \bm{\theta}_l\right)\\
    &= \left( F \text{var}^\star(\bm{\theta})F^\top \right)_{ij}.
\end{align*}

\subsection{Proof of the consistency criterion}

To prove the consistency criterion, it is enough to consider the uncertain vectors $\bm{\theta}= (\bm{\theta}_1,\dots,\bm{\theta}_{n+1})$ and  $\bm{\theta}_{1:n}= (\bm{\theta}_1,\dots,\bm{\theta}_n)$ with marginal, with the latter being described by
\begin{equation} \label{marginal for consistency}
    f_{\bm{\theta}_{1:n}}(\theta_{1:n}) = \sup_{\theta_{n+1} \in \mathbb{R}} f_{\bm{\theta}}(\theta_{1:n},\theta_{n+1}).
\end{equation}
We approach the problem by calculating the marginal using the Taylor expansion of $f_{\bm{\theta}}$ about $\theta^\star$ as in Section~\ref{Proof Bilinear}, that is
\[
    f_{\bm{\theta}}(\theta^\star + h) = 1+ \frac{1}{2}h^\top \pre^\star(\bm{\theta})h + \mathcal{O}(||h||^3),
\]
for some $h \in \mathbb{R}^{n+1}$. The precision can be split into sub-matrices as
\begin{equation}
    \pre^\star(\bm{\theta}) =     \begin{bmatrix}
             P_{11} & P_{12} \\
             P_{21} & P_{22} \\
         \end{bmatrix},
\end{equation}
where $P_{11}$ is $n\times n$ matrix, $P_{12}=P_{21}^\top$ is vector in $\mathbb{R}^n$, and $P_{22}$ is a scalar. Differentiating the Taylor expansion with respect to $h_{n+1}$, the $n+1$-th component of $h$, we find that the supremum in the marginal of \eqref{marginal for consistency} is satisfied for
\[
    h_{n+1}^c = - P_{22}^{-1}h_{1:n}P_{12}.
\]
Defining
$h_c = \begin{bmatrix}
    h_{1:n} & h_{n+1}^c \\
\end{bmatrix}^\top$,
and substituting this equation back into the Taylor expansion, we obtain
\[
    f_{\bm{\theta}}(\theta^\star + h^c) = 1 +\frac{1}{2}h_c^\top \pre^\star(\bm{\theta}) h_c +  \mathcal{O}(||h_c||^3).
\]
Expanding the quadratic term in terms of $h_{1:n}$ gives
\[
    h_c^\top \pre^\star(\bm{\theta}) h_c = h_{1:n}^\top P_{11} h_{1:n} - P_{22}^{-1}h_{1:n}P_{12}P_{21}h_{1:n} 
    - h_{1:n}P_{12}P_{22}^{-1}h_{1:n}P_{21} + P_{22}^{-1}h_{1:n}P_{21}h_{1:n}P_{12}.
\]
Finally, differentiating the Taylor expansion gives the precision matrix of $\bm{\theta}_{1:n}$ as
\begin{equation}
    \pre^\star(\bm{\theta}_{1:n}) = P_{11} - P_{12}P_{22}^{-1}P_{21}.
\end{equation}
The covariance is simply the inverse of the precision. Using block matrix inverses the covariance is equal to the upper-left $n \times n$  block in $\cov^\star(\bm{\theta})$ as required.

\section{Proofs of results in Section~\ref{sec:possibilisticGaussianProcesses}}
\label{proofs:PGP}

Propositions~\ref{prop:GP pred regression}, \ref{prop:GP pred binary classification}, and \ref{prop:GP pred multiclass classification} are proved in Sections~\ref{sec:regression}, \ref{sec:BinaryClassification}, and \ref{sec:MulticlassClassification}, respectively.

\subsection{Regression}
\label{sec:regression}

\subsubsection{Systematic Error}

Consider a linear model $\bm{g}(x)= \phi(x)^\top \bm{w}$ where $\bm{w}$ is a predetermined weight vector which we aim to learn and where $\phi:\mathbb{R}^D \to \mathbb{R}^N $ maps $D$-dimensional inputs in the data space to $N$-dimensional feature space. The prior knowledge about $\bm{w}$ is modelled by a possibility function $f_{\bm{w}} = \overline{\mathrm{N}}(0, \Sigma_{\mathrm{p}})$. Additionally, we take into account the presence of some deterministic error/deviation $\bm{\epsilon}$ when measuring the value of $g$ at a given input point $X$, i.e., $\bm{y} = \bm{g}(X) + \bm{\epsilon}$, and we assume that $\bm{\epsilon}$ is described by $\overline{\mathrm{N}}(0,\sigma^2I)$. The source of these errors could be, e.g., inaccuracies in the measurement process due to physical limitations or numerical approximations rather than stochastic perturbations as in the standard GP approach (when ensuring the objectivity of the probabilistic representation). We can compute the expected function $m(\cdot)$ and covariance function $k(\cdot,\cdot)$ for the considered linear model as $\mathbb{E}^\star(\bm{y}_i) = \phi(x)^\top \mathbb{E}^\star(w) + \mathbb{E}^\star(\bm{\epsilon}_i) = 0$ and
\begin{align*}
    \cov^\star(\bm{y}_i,\bm{y}_j) & = \cov^\star(\phi(x_i)^\top\bm{w}+\bm{\varepsilon}_i,\phi(x_j)^\top\bm{w}+\bm{\varepsilon}_j) \\
    & = K(x_i,x_j)+\sigma^2\mathbbm{1}_{i=j},
\end{align*}
where $K(x_i,x_j) = \phi(x_i)^\top \cov^\star(\bm{w})\phi(x_j)$. In general settings, where we do not make assumptions about the structure of $\bm{g}$, any covariance function satisfying standard assumptions can be used.
The possibility function describing the function values $\bm{z} = \bm{g}(x)$ and $\bm{z}'= \bm{g}(x')$ at $x$ and $x'$, respectively, is
\begin{equation*}
    f_{\bm{z},\bm{z}'}(z,z') =
    \sup \left\lbrace f_{\bm{w}}(w) \bigg\vert w \in \mathbb{R}^N, \begin{bmatrix}
        z\\
        z'
    \end{bmatrix}=\begin{bmatrix}
        \phi^\top (x)\\
        \phi^\top (x')
    \end{bmatrix}w \right\rbrace.
\end{equation*}

We will consider training points $X = (X_1,\dots,X_n)$ with observed function values $y = (y_1,\dots,y_n)^{\top}$ which are realisations of the uncertain variables $\bm{y}_i = \bm{g}(X_i)+\bm{\epsilon}_i$, where $\bm{\epsilon}_1,\dots,\bm{\epsilon}_n$ are independently described. In addition, we have a test point $X_{\mathrm{t}}$ and the corresponding uncertain variable $\bm{y}_{\mathrm{t}}= \bm{g}(X_{\mathrm{t}}) + \bm{\epsilon}_{\mathrm{t}}$ in the feature space.

The joint prior possibility function describing the observed value $\bm{y}$ at the training points and $\bm{g}_{\mathrm{t}}$ at the test points is
  \[
     f_{\bm{y},\bm{g}_{\mathrm{t}}}(y,g_{\mathrm{t}}) =\overline{\mathrm{N}}\left(0,
         \begin{bmatrix}
             K+\sigma^2 I & K_{\mathrm{t}}^\intercal\\
             K_{\mathrm{t}} & K_{\mathrm{t}\mathrm{t}} \\
         \end{bmatrix}\right).
  \]

To simplify notation we set $ K= K(X,X)$, $K_{\mathrm{t}}=K(X_{\mathrm{t}},X)$ and $K_{\mathrm{t}\mathrm{t}}=K(X_{\mathrm{t}},X_{\mathrm{t}})$. The conditional possibility function for $\bm{g}_{\mathrm{t}}$ given $\bm{y}$ is found to be $\overline{\mathrm{N}}(\mu_{\mathrm{t}}, \Sigma_{\mathrm{t}})$ where
\begin{align} 
     \mu_{\mathrm{t}} &= K_{\mathrm{t}}[K+\sigma^2 I]^{-1}y \label{Reg with Sys Error-Prediction mean}\\
     \Sigma_{\mathrm{t}} &= K_{\mathrm{t}\mathrm{t}}-K_{\mathrm{t}}[K+\sigma^2 I]^{-1}K_{\mathrm{t}}^\intercal \label{Reg with Sys Error-Prediction covar}
\end{align}
Note the exact correspondence between the probabilistic and possibilistic cases despite the varying definitions.

\subsubsection{Random Error}\label{subsec:Reg w/ rand er}

In many situations the error when observing realisations of the function may be due to truly random effects. This situation can be effectively modelled with possibility theory. As before we consider training points $X = (X_1,\dots,X_n)$ with observed function values $y = (y_1,\dots,y_n)^{\top}$ which are realisations of the random variables $\bm{y}_i = \bm{g}(X_i)+\bm{\epsilon}_i$, where $\bm{\epsilon}_1,\dots,\bm{\epsilon}_n$ are independently distributed $\mathrm{N}(0,\sigma^2I)$. The possibility function describing $\bm{g}_t$ at test points $x_t$ is given by
\begin{equation} \label{random-regression}
    f(g_t|X,y, x_t) = \sup_{g \in G} f(g_t \vert X,X_t,g)f(g\vert X,y),
\end{equation}
where $f(g_t \vert X,X_t,g)$ is known from the zero-noise version of the previous section and
\begin{equation} \label{bayes-posterior}
    f(g\vert X,y) = \frac{p(y\vert X,g)  f(g \vert X)}{f(y\vert X)},
\end{equation}
where $f(y\vert X) = \sup_{g'} p(y \vert X,g) f(g \vert X )$. The conditional possibility function for $\bm{g}$ given $y$ is found to be $\overline{\mathrm{N}}(\mu, \Sigma)$ with $\mu = \sigma^{-2}\Sigma y$ and $\Sigma = (\sigma^{-2} I + K^{-1})^{-1}$.

The supremum in \eqref{random-regression} can then be solved using a formulation of the standard Kalman filter identity \cite{houssineau2018smoothing}. The resulting possibility function is identical to the systematic error case (see \eqref{Reg with Sys Error-Prediction mean} and \eqref{Reg with Sys Error-Prediction covar}), despite the noise being random in nature. 

\subsection{Binary Classification}
\label{sec:BinaryClassification}

This section extends the PGP theory to binary classification problems, closely following the probabilistic treatment. We follow the standard approach of discriminative GP classification by considering a latent function, $g^\star$, which is passed through a sigmoid function, $\sigma(g^\star(X))= p(Y=1|X)=1-p(Y=-1|X)$, to determine the classification of each data point. Given data points $X = \lbrace x_1,x_2,\ldots x_n\rbrace$, with corresponding labels $Y=\lbrace y_1,\ldots, y_n \rbrace$, and test points $X_t$; the posterior distribution of the latent function is given by
\begin{equation}
    f(g_t|X,y, x_t) = \sup_{g \in G} f(g_t \vert X,X_t,g)f(g\vert X,Y). \label{classification-split}
\end{equation}
From the regression problem, $f(g_t \vert X,X_t,g)$ is known. However, as in the probabilistic case, the probit's, $p(Y\vert X,g)$, role within $f(g\vert X,Y)$ makes it analytically intractable. Therefore, we perform a Laplace approximation to obtain $q(g\vert X,Y) = \overline{\mathrm{N}}(g\vert K\nabla\log p(Y|\hat{g}), (K^{-1}+W^{-1})^{-1})$, details are contained within Appendix \ref{sec: Laplace Appendix}. This results in a normal approximation, $\overline{\mathrm{N}}(g_t\vert \mu, \Sigma)$, for $f(g_t \vert X,y, x_t)$, with
\begin{align}
     \mu &= K_{\mathrm{t}}\nabla\log p(Y|\hat{g}) \label{class posterior mean}\\
     \Sigma &= K_{\mathrm{t}\mathrm{t}}-K_{\mathrm{t}}[K+W^{-1}]^{-1}K_{\mathrm{t}}^\intercal. \label{class posterior covar}
\end{align}

The most credible probability assigned to class 1, is then given by $\mathbb{E}^\star(\sigma(\bm{g}_t))$, which is simply the probability corresponding to the most likely value of the PGP at the test point.

\subsection{Multiclass Classification}
\label{sec:MulticlassClassification}

This section extends the PGP theory to multiclass classification problems, again we closely following the probabilistic treatment. Let $L$ be the set of classes and let the probability of the classes in $L$ at input $X$ be characterised the softmax of a set of latent functions $\{\bm{g}^l\}_{[l \in L]}$, with $\bm{g}^l$ described by a PGP with kernel $K_l$ and with $\bm{g}^l$ independent of $\bm{g}^{l'}$ for any $l,l' \in L$ such that $l \neq l'$.

Given data points $X = \lbrace x_1,x_2,\ldots x_n\rbrace$, with corresponding labels $Y=\lbrace y_1,\ldots, y_n \rbrace$, and test points $X_t$; the posterior distribution of the latent function is given by
\begin{equation}
    f(g_t|X,y, x_t) = \sup_{g \in G} f(g_t \vert X,X_t,g)f(g\vert X,Y). \label{multi classification-split}
\end{equation}
From the regression problem, $f(g_t \vert X,X_t,g)$ is known. However, as in the binary case, the softmax's, $p(Y\vert X,g)$, role within $f(g\vert X,Y)$ makes it analytically intractable. Therefore, we perform a Laplace approximation to obtain $q(g\vert X,Y) = \overline{\mathrm{N}}(K(y-\hat{\pi}), (K^{-1}+W^{-1})^{-1})$, details are contained within Appendix~\ref{sec: multiclass laplace appendix}

Thus the posterior $f(g_{\mathrm{t}} \vert X,y, X_{\mathrm{t}})$ is approximated by the Gaussian possibility function $\overline{\mathrm{N}}(g_{\mathrm{t}}; \mu_{\mathrm{t}}, \Sigma_{l,\mathrm{t}})$ with parameters
$\mu_{\mathrm{t}} = Q^\top_{\mathrm{t}}(y-\hat{\pi})$
and
$\Sigma_{\mathrm{t}} = \mathrm{diag}(k(X_t, X_t)) - Q_{\mathrm{t}} ^\top (K + W^{-1})^{-1} Q_{\mathrm{t}}
$. Where $k(X_t, X_t) = (K^1,\dots, K^{\vert L \vert})$ and we have utilised the notation
\[
    Q_t = \begin{pmatrix}
    K^1_{tt} & 0 & \cdots & 0 \\
    0 & K^2_{tt} & \cdots & 0 \\
    \vdots & \vdots & \ddots & \vdots \\
    0 & 0 & \cdots & K^{\vert L\vert}_{tt}
\end{pmatrix}.
\]

The most credible probability assigned to label $l$ at $X_i$, is then given by $\mathbb{E}^\star(\pi^l_i(\bm{g}_t))$, which is simply the probability corresponding to the most likely value of the PGP at the test point.

\section{Laplace approximation} \label{sec: Laplace Appendix}

\subsection{Binary case}

We decompose the second term of equation \eqref{classification-split} with Bayes rule
\begin{equation} \label{bayes-posterior2}
    f(g\vert X,Y) = \frac{p(Y\vert X,g)  f(g \vert X)}{f(Y\vert X)}
\end{equation}
The terms in the numerator are known already as the sigmoid function and PGP prior. 
\begin{align*}
     &p(Y\vert X,g) = \sigma(g)\\
     & f(g \vert X) = \overline{\mathrm{N}}(0, K) \\
    f(Y\vert X) &= \sup_{g'} p(Y \vert X,g') f(g' \vert X )
\end{align*}

Due to the form of the sigmoid function, the posterior $f(g\vert X,Y)$ is non-Gaussian and therefore analytically intractable. We use the Laplace approximation on the posterior to find a Gaussian approximation $q(g\vert X,Y)$ for $f(g\vert X,Y)$. Performing a second-order Taylor expansion around the maximum of the posterior we obtain
\[
    q(g) = \overline{\mathrm{N}}(g\vert \Hat{g}, A^{-1})
\]

where $\Hat{g} = \argmax_{g}f(g\vert X,Y)$ and $A = -\nabla \nabla \log f(g\vert X,Y)$ is the hessian of the negative log posterior at this point. To find $\Hat{g}$ we need only consider the numerator of \eqref{bayes-posterior2}, as the denominator is independent of $g$. To this end, we define $\Psi$ to be the logarithm of the numerator and differentiate it twice 
\begin{align}
    \Psi (g) &\defeq  \log p(Y\vert X,g) + \log f(g\vert X)\\
    &= \log p(Y\vert X,g) -\frac{1}{2} g^\top K^{-1}g\\ \label{1stderivlaplace}
    \notag \\
    \nabla \Psi(g) &= \nabla\log p(Y\vert X,g) - K^{-1}g \\
   \nabla\nabla \Psi(g) &=  \nabla\nabla\log p(Y\vert X,g) - K^{-1} = -W -K^{-1} \label{2ndderivLaplace}
\end{align}

 $W\defeq -\nabla\nabla\log p(Y\vert X,g)$ is diagonal as the label at a point is only dependent on the value of $g$ at that point and nowhere else. And the location of the maximum is given by $\hat{g}=K\nabla\log p(Y\vert X,\hat{g}) $. As in the probabilistic approach, $\hat{g}$ cannot be found analytically but can be simply obtained using Newton's method. Using the covariance calculated in \eqref{2ndderivLaplace} we have the approximation 
 \begin{equation} \label{laplace-approx-final}
     q(g) = \overline{\mathrm{N}}(g;\hat{g}, (K^{-1}+W)^{-1})
 \end{equation}
Notably, the definitions of the expected value $\mathbb{E}^{\star}$ and covariance matrix mean that the Laplace approximation becomes a moment matching method within possibility theory.

\subsection{Multiclass case} \label{sec: multiclass laplace appendix}

To proceed, we first introduce some additional notation:
\begin{align}
        g &= (g_1^1, \ldots, g_n^1, g_1^2, \ldots, g_n^2, \ldots, g_1^{\vert L \vert}, \ldots, g_n^{\vert L \vert})^\top \\
        g^l &= (g_1^1, \ldots, g_n^l)^\top\\
        g_i &= (g_i^1, \ldots, g_i^{\vert L \vert})^\top \\
        K^l &= K^l(X,X) \\
        K &= \text{Blockdiag}(\{K^l\}_{l \in L})
\end{align}
In addition, $y^l_i$ is one if $l$ is the class label for training point $i$ and 0 else, and $y=(y_1^1, \ldots, y_n^1, y_1^2, \ldots, y_n^2, \ldots, y_1^{\vert L \vert}, \ldots, y_n^{\vert L \vert})^\top$. Similarly, $\pi_i^l = p(y_i^l\vert g_i)$ and $\pi$ is defined accordingly

Following the binary case, we calculate multiclass analogue of \eqref{1stderivlaplace} and \eqref{2ndderivLaplace}
\begin{align}
        \nabla \Psi &= -\mathbf{K}^{-1} \mathbf{f} + \mathbf{y} - \pi \\
        \nabla\nabla \Psi(g) &= -W -K^{-1},
\end{align}
with
\[
\Psi(g) \defeq -\frac{1}{2}g^\top K^{-1}\mathbf{f} + y^\top g - \sum_{i=1}^n \log \left(\sum_{l=1}^{\vert L \vert} \exp(g_i^c) \right),
\]
where we have used \eqref{multi laplace trick} in calculating the second derivatives and $W \defeq \textit{diag}(\pi)-\Pi\Pi^\top$ with $\Pi$ the matrix obtained by stacking vertically the diagonal matrices
$\textit{diag}(\pi^l)$, and $\pi^l$ is the subvector of $\Pi$ relating to label l.
\begin{equation} \label{multi laplace trick}
    \frac{\partial^2}{\partial f_i^c \partial f_j^{c'}} \log \sum_j \exp(f_j^c) = \pi_i^c \delta_{cc'} - \pi_i^c \pi_i^{c'}
\end{equation}

Thus we have obtained the approximation $q(g\vert X,Y) = \overline{\mathrm{N}}(g;K(y-\hat{\pi}), (K^{-1}+W^{-1})^{-1})$, where $\hat{\pi} = p(y\vert \hat{g})$

\section{Proofs of results in Section~\ref{sec:activeLearningWithPossibilityTheory}}
\label{proofs:AL}

\begin{remark}[On the reformulation of Property A3]
\label{rem:PropA3}
A more direct analogue of Property A3 as stated in \cite{wimmer2023quantifying} would be
\begin{enumerate}
\item[A3'] If $f'_{\bm{\theta}}$ is such that $f'_{\bm{\theta}}(\theta) = f_{\bm{\theta}}(\theta+c)$ for some constant $c \in \Theta$, then $\mathrm{EU}(f'_{\bm{\theta}}) = \mathrm{EU}(f_{\bm{\theta}})$
\end{enumerate}
which makes sense in the setting where $Y$ discrete, $f_{\bm{\theta}}$ a Dirichlet possibility function on the corresponding simplex $\Theta$, and with $p_Y(y \vert \theta) = \theta_y$, as considered in \cite{wimmer2023quantifying}. However, transformations of the form $f'_{\bm{\theta}}(\theta) = f_{\bm{\theta}}(\theta+c)$ are only well defined when the support of $f_{\bm{\theta}}$ verifies the following property: it holds that $\theta - c \in \Theta$ for any $\theta$ in the support of $f_{\bm{\theta}}$ (this applies equally to probability distributions). With the Dirichlet model, our version of Property A3 would be such that $T(\theta) = \alpha \theta + (1 - \alpha) c$, where the component of $c$ corresponding to label $y$ is equal to $p(y)$. This is similar but not equivalent to Property A3', which shows that Property A3 is not a generalisation of Property A3'. However, Property A3 allows for considering different forms for $p_Y(\cdot \vert \theta)$ and makes no assumption on the support of $f_{\bm{\theta}}$. The two measures of EU that we suggest satisfy both versions of this property, so our only motivation for introducing Property A3 is for its applicability beyond the Dirichlet model.
\end{remark}

\begin{proof}[Proof of Proposition~\ref{prop:propertiesMeasuresUncertainty}]
We start the proof with the notion $U^{\Theta}$ of EU on the parameter space $\Theta$.
\begin{enumerate}[label=A\arabic*, start=0]
  \item $U^{\Theta}$ is non-negative since possibility functions are non-negative.
  \item If $f_{\bm{\theta}}$ is the indicator $\bm{1}_{\hat{\theta}}$ of a point $\hat{\theta}$, that is $f_{\bm{\theta}}(\hat{\theta}) = 1$ and $f_{\bm{\theta}}(\theta) = 0$ if $\theta \neq \hat{\theta}$, then the integral of $f_{\bm{\theta}}$ is equal to $0$, so that $U^{\Theta} = 0$. Note that the indicator of a point is different from a Dirac delta which would give an integral of $1$.
  \item If $f_{\bm{\theta}}$ is the uninformative possibility function, that is $f_{\bm{\theta}}(\theta) = 1$ for any $\theta \in \Theta$, then it holds that $f_{\bm{\theta}}$ is pointwise greater than or equal to any other possibility function. It follows that $U^{\Theta}$ is indeed maximised.
  \item In the multinomial setting, we have $T(\theta) = \alpha \theta + (1 - \alpha) c$, so that a change variable in $U^{\Theta}_{f'_{\bm{\theta}}} = \int f'_{\bm{\theta}}(\theta) \mathrm{d}\theta$ suffices to show that $U^{\Theta}_{f'_{\bm{\theta}}} = \alpha U^{\Theta}_{f_{\bm{\theta}}}$.
  \item If there exists a region $A \subset \Theta$ of positive volume such that $f'_{\bm{\theta}}(\theta) > f_{\bm{\theta}}(\theta)$ for any $\theta \in A$ then
  \begin{align*}
      U^{\Theta}_{f'_{\bm{\theta}}} & = \int f'_{\bm{\theta}}(\theta) \mathrm{d}\theta \\
      & = \int_A f'_{\bm{\theta}}(\theta) \mathrm{d}\theta + \int_{\Theta \setminus A} f'_{\bm{\theta}}(\theta) \mathrm{d}\theta \\
      & > \int f_{\bm{\theta}}(\theta) \mathrm{d}\theta = U^{\Theta}_{f_{\bm{\theta}}}.
  \end{align*}
\end{enumerate}
We now consider the notion $U^{\mathsf{Y}}$ of EU on the set $\mathsf{Y}$, for which we prove the properties of interest in a different order. The conditional probability distribution $p_Y(\cdot \vert \theta)$ is the same throughout this proof but the underlying possibility function is modified, so we highlight the dependence on the latter by writing $U^{\mathsf{Y}}_{f_{\bm{\theta}}}$ when ambiguities arise.
\begin{enumerate}
  \item[A4] If there exists a region $A \subset \Theta$ of positive volume such that $f'_{\bm{\theta}}(\theta) > f_{\bm{\theta}}(\theta)$ for any $\theta \in A$ then
  \begin{align*}
      U^{\mathsf{Y}}_{f'_{\bm{\theta}}} & = \int \sup_{\theta \in \Theta} f'_{\bm{\theta}}(\theta) p_Y(y \vert \theta)  \mathrm{d} y - 1 \\
      & \geq \int \sup_{\theta \in \Theta} f_{\bm{\theta}}(\theta) p_Y(y \vert \theta) \mathrm{d} y - 1 = U^{\mathsf{Y}}_{f_{\bm{\theta}}}.
  \end{align*}
  \item[A1] If $f_{\bm{\theta}} = \bm{1}_{\hat{\theta}}$ for some $\hat{\theta} \in \Theta$, then
  \[
    U^{\mathsf{Y}}_{\bm{1}_{\hat{\theta}}} = \int_{\mathsf{Y}} p_Y(y \vert \theta) \mathrm{d}y - 1 = 0.
  \]
  \item[A2] If $f_{\bm{\theta}}$ is the uninformative possibility function, then it hold that $f'_{\bm{\theta}} \leq f_{\bm{\theta}}$ pointwise for any possibility function $f'_{\bm{\theta}}$. Therefore, by Property A4, $U^{\mathsf{Y}}_{f'_{\bm{\theta}}} \leq U^{\mathsf{Y}}_{f_{\bm{\theta}}}$ so that $U^{\mathsf{Y}}$ is indeed maximised for the the uninformative possibility function (but it might not be the unique maximiser).
  \item[A0] For every possibility function $f_{\bm{\theta}}$, there exists a point $\hat{\theta} \in \Theta$ such that $\bm{1}_{\hat{\theta}} \leq f_{\bm{\theta}}$ (in fact $\hat{\theta}$ is the expected value of $\bm{\theta}$ when $\bm{\theta}$ is described by $f_{\bm{\theta}}$). Therefore, by Property A4, $U^{\mathsf{Y}}_{f_{\bm{\theta}}} \geq U^{\mathsf{Y}}_{\bm{1}_{\hat{\theta}}} = 0$.
  \item[A3] Since $T$ is bijective, it holds that $f_{T(\bm{\theta})}(\theta') = f_{\bm{\theta}}(T^{-1}(\theta'))$ for any $\theta' \in \Theta$. It follows that
  \begin{align*}
    U^{\mathsf{Y}}_{f_{T(\bm{\theta}})} & = \int \sup_{\theta' \in \Theta} f_{\bm{\theta}}(T^{-1}(\theta') p_Y(y \vert \theta')  \mathrm{d} y - 1 \\
    & = \int \sup_{\theta \in \Theta} f_{\bm{\theta}}(\theta) p_Y(y \vert T(\theta))  \mathrm{d} y - 1 \\
    & = \alpha \int \sup_{\theta \in \Theta} f_{\bm{\theta}}(\theta) p_Y(y \vert T(\theta))  \mathrm{d} y + (1 - \alpha) - 1 \\
    & = \alpha U^{\mathsf{Y}}_{f_{\bm{\theta}}},
  \end{align*}
  where a change of variable in the supremum was considered between the first and second line.
\end{enumerate}
This concludes the proof of the proposition.
\end{proof}

\begin{remark}[On $U^{\mathsf{Y}}$ only satisfying the weak version of Property A4]
\label{rem:PropA4}
The notion $U^{\mathsf{Y}}$ of EU is defined on the set $\mathsf{Y}$, and it is therefore not surprising that it only satisfies the weak version of Property A4 since some information gains about $\bm{\theta}$ might not change the credibility of any point in $\mathsf{Y}$. For instance, in a binary classification case, i.e.\ $\mathsf{Y} = \{-1, 1\}$, $\Theta = [0,1]$ and $p_Y(1 \vert \theta) = \theta = 1 - p_Y(-1 \vert \theta)$, if it is known that the probability of the positive class is either $1$ or $0$, then each label could occur either all the time or never, which results in this case achieving the maximum value for $U^{\mathsf{Y}}$ despite the informativeness of the underlying possibility function $f_{\theta}$, equal to the indicator of the set $\{0,1\}$.
\end{remark}

In the following two propositions, we denote by $\bar{\mathbb{P}}$ the OPM on $\Omega$ induced by the considered possibility functions. This way, we can measure the credibility of any event $E \subseteq \Omega$ as $\bar{\mathbb{P}}(E)$.

\begin{proof}[Proof of Proposition~\ref{prop:nec_binary}]
Let $l^* = 1$ be the most likely label, i.e., $p_Y(1 ; \mu_x) > p_Y(-1 ; \mu_x)$, which is equivalent to $p_Y(1 ; \mu_x) > 1/2$. We also assume that the label $1$ is such that $p_Y(1 ; \theta) = \Phi(\theta)$. The possibility of the event $p_Y(1 ; \bm{\theta}) > 1/2$ is $1$ and the necessity of this event is
\begin{align*}
N_{\mathrm{bin}} & = 1 - \bar{\mathbb{P}}( p_Y(-1 ; \bm{\theta}) \geq 1/2 ) \\
& = 1 - \sup \Big\{ \overline{\mathrm{N}}(\theta; \mu_x, \sigma^2_x) : \theta\in \mathbb{R}, p_Y(-1 ; \theta) \geq 1/2 \Big\}.
\end{align*}
The condition $p_Y(-1 ; \theta) \geq 1/2$ can be expressed as $1 - \Phi(\theta) \geq 1/2$, which further simplifies to $\theta \leq 0$. Since $\mu_x \geq 0$, the maximum value of $\overline{\mathrm{N}}(\theta; \mu_x, \sigma^2_x)$ over the set of non-positive numbers is achieved at $\theta = 0$. Therefore, the necessity of interest is
\[
N_{\mathrm{bin}} = 1 - \overline{\mathrm{N}}(0; \mu_x, \sigma^2_x).
\]
The expression is the same if $-1$ is the most likely label.
\end{proof}

\begin{proof}[Proof of Proposition~\ref{prop:nec_multi}]
Assume that $l^*$ is the most likely label in $L$, then the necessity for $l^*$ to be the correct classification is
\[
N_{\mathrm{multi}} = 1 - \bar{\mathbb{P}}\big( \exists l \in L, l \neq l^*, p_Y(l ; \bm{\theta}_l) \geq p_Y(l^* ; \bm{\theta}_{l^*}) \big).
\]
Since $\bar{\mathbb{P}}( \cdot )$ refers to a possibility function in this case, we have that $\bar{\mathbb{P}}\big( \bigcup_{i=1}^n A_i \big) = \max_{i \in \{1,\dots,n\}} \bar{\mathbb{P}}( A_i )$. Indeed, there is no double counting with the maximum as there would be with a sum in the case of non-empty intersection between the sets $A_i$. Therefore, we obtain
\[
N_{\mathrm{multi}} = 1 - \max_{l \neq l^*} \bar{\mathbb{P}}\big( p_Y(l ; \bm{\theta}_l) \geq p_Y(l^* ; \bm{\theta}_{l^*}) \big).
\]
With $p_Y(\cdot ; \theta)$ being the (logistic) softmax, the condition $p_Y(l ; \bm{\theta}_l) \geq p_Y(l^* ; \bm{\theta}_{l^*})$ is equivalent to $\bm{\theta}_l \geq \bm{\theta}_{l^*}$. Similarly to the case of binary classification, since $\mu_{l^*,x} > \mu_{l,x}$, the possibility of the event $\bm{\theta}_l \geq \bm{\theta}_{l^*}$ is equal to the possibility of $\bm{\theta}_l = \bm{\theta}_{l^*}$. It follows that
\[
N_{\mathrm{multi}} = 1 - \max_{l \neq l^*} \max_{\theta \in \mathbb{R}} \overline{\mathrm{N}}(\theta; \mu_{l,x}, \sigma^2_{l,x})\overline{\mathrm{N}}(\theta; \mu_{l^*,x}, \sigma^2_{l^*,x}),
\]
which simplifies to
\[
N_{\mathrm{multi}} = 1 - \max_{l \neq l^*} \exp\bigg( -\frac{1}{2}\frac{(\mu_{l^*,x} - \mu_{l,x})^2}{\sigma_{l^*,x}^2 + \sigma_{l,x}^2}  \bigg),
\]
as required.
\end{proof}

\begin{proof}[Proof of Proposition~\ref{prop:nec_properties}]
For both versions of the necessity of correct classification, the quantities $1 - N_{\mathrm{multi}}$ and $1 - N_{\mathrm{bin}}$ are equal to the possibility $\bar{\mathbb{P}}\big( \exists l \in L, l \neq l^*, p_Y(l ; \bm{\theta}_l) \geq p_Y(l^* ; \bm{\theta}_{l^*}) \big)$, which is therefore in the interval $[0,1]$. Since $l^*$ is the unique maximiser of $p_Y( \cdot; \mu_x)$ and since $p_Y( l; \mu_x)$ is strictly monotonic as a function of $\mu_x$ (resp.\ $\mu_{l,x}$), it follows that $\mu_x \neq 0$ (resp.\ $\mu_{l^*,x} \neq \mu_{l,x}$ for all $l \neq l^*$).
\begin{enumerate}
  \item[A0] It holds that $1 - N_{\mathrm{bin}} \geq 0$ (resp.\ $1 - N_{\mathrm{multi}} \geq 0$) since possibility functions have a maximum equal to 1.
  \item[A1] If it holds that $\sigma_x \to 0$ (resp.\ $\sigma_{l,x} \to 0$ for all $l \in L$) then $\overline{\mathrm{N}}(0; \mu_x, \sigma^2_x) \to 0$ (resp.\ $\overline{\mathrm{N}}(\mu_{l,x}; \mu_{l^*,x}, \sigma_{l^*,x}^2 + \sigma_{l,x}^2) \to 0$ for all $l \neq l^*$).
  \item[A2] If it holds that $\sigma_x \to \infty$ (resp.\ $\sigma_{l,x} \to \infty$ for all $l \in L$) then $\overline{\mathrm{N}}(0; \mu_x, \sigma^2_x) \to 1$ (resp.\ $\overline{\mathrm{N}}(\mu_{l,x}; \mu_{l^*,x}, \sigma_{l^*,x}^2 + \sigma_{l,x}^2) \to 1$ for all $l \neq l^*$), which is the maximum value.
  \item[A4] We first deal with the binary case. Considering two Gaussian possibility functions $\overline{\mathrm{N}}( \mu_x, \sigma^2_x)$ and $\overline{\mathrm{N}}( \mu'_x, \sigma^{\prime 2}_x)$, the only way for the latter to be strictly less informative than the former is for $\mu'_x = \mu_x$ and $\sigma^{\prime 2}_x > \sigma^2_x$. Therefore, it holds that $\overline{\mathrm{N}}(0; \mu'_x, \sigma^{\prime 2}_x) > \overline{\mathrm{N}}(0; \mu_x, \sigma^2_x)$. If we denote by $N'_{\mathrm{bin}}$ the necessity associated with $\overline{\mathrm{N}}( \mu'_x, \sigma^{\prime 2}_x)$, we obtain that $1 - N'_{\mathrm{bin}} > 1 - N_{\mathrm{bin}}$ as required. Now considering the multiclass case, since we are dealing with product of normal possibility functions, it is enough for one of them to strictly less informative for the product to be strictly less informative. However, if this term is neither $l^*$ nor the one achieving the maximum over $l \neq l^*$, then $N_{\mathrm{multi}}$ is unchanged by such a loss of information, so that we can only obtain that $1 - N'_{\mathrm{multi}} \geq 1 - N_{\mathrm{multi}}$ in general.
\end{enumerate}
\end{proof}

\section{Experiment Protocol}\label{experimental-settings}

This appendix details the experimental settings for the three classes of experiments. The code will be made available on GitHub upon acceptance. All experiments were conducted in a custom active learning pipeline implemented in Python. The pseudocode is available in Appendix \ref{sec: Pseudocode}. The GP models were implemented using a custom version of GPytorch v1.11 \cite{gardner2018gpytorch}, detailed in Appendix \ref{sec: gp models}. Details of the baseline active learning strategies are included in Appendix \ref{sec: baseline strategies}. All experiments were run on a laptop with an NVIDIA GeForce RTX 3080 Ti GPU. Each binary classification experiment took less than 2 hours to run, and each multiclass experiment took less than 8 hours to run. The optimisation for approach $U^L$ was solved using a Scipy's Brent Optimizer.

\subsection{Additional experimental details and results}

The training parameters used and references for the datasets are contained within Table~\ref{exper-settings-table}. To complement the results shown in table format in the main text, the accuracy for each method as a function of the time step is shown in Figure~\ref{realworld-classification}, where we can see that the considered datasets are of varying difficulty, hence allowing to thoroughly assess the performance of the proposed acquisition functions. In \texttt{Fri} and \texttt{Breast Cancer}, the acquisition function $N_{\mathrm{bin}}$ performs very well throughout training, whereas $U^L_{\mathrm{bin}}$ clearly outperforms all the other methods in \texttt{Sonar}. In \texttt{Wine}, \texttt{Vehicle} and \texttt{Ionosphere}, the difference between the 3 leading methods, i.e., $N_{\mathrm{bin}}$, $U^L_{\mathrm{bin}}$ and \emph{Standard}, is less pronounced. These results are not surprising as optimal behaviours differ between datasets, which is difficult to capture with local methods.

\begin{figure*}[t]
    \centering
    \includegraphics[width=\linewidth,trim=0pt 10pt 0pt 10pt,clip]{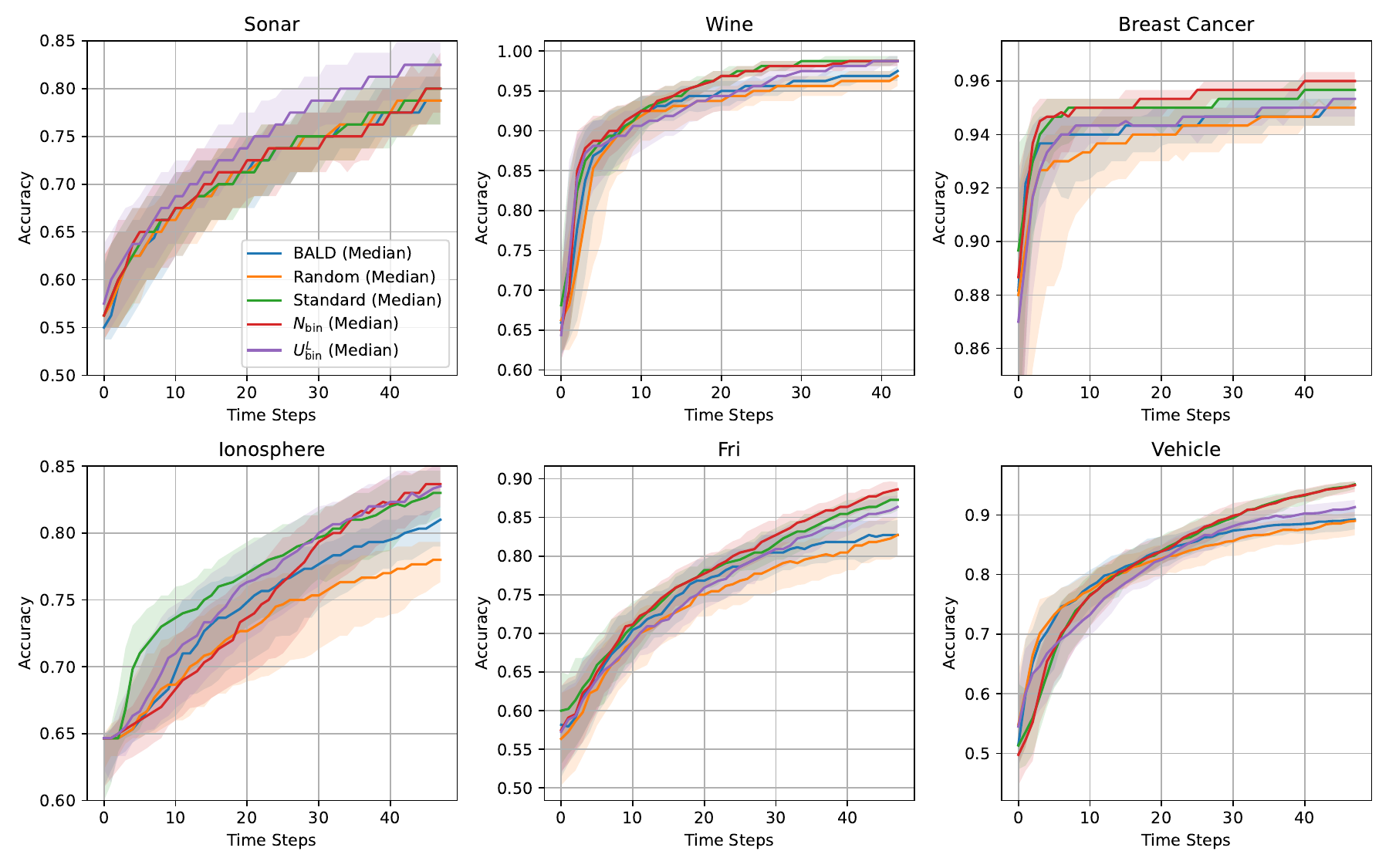}
    \caption{Performance of BALD, Random, Standard and our approaches on six binary classification problems on real-world datasets. The shaded area corresponds to the interval between the first quantile (Q1) and third quantile (Q3).}
    \label{realworld-classification}
\end{figure*}

\begin{table*}[!ht]
\footnotesize
\centering
\caption{This table contains all parameters for the experiments described in Section 6} \label{exper-settings-table}
\begin{tabular}{ |c|c|c|c|c|c| } 
 \hline
 Dataset & Reference & No.\ of runs & No.\ of queries & Size training & Size testing \\
 \hline
 Block in Center & \cite{houlsby2011bayesian} & 50 & 25 & 150 & 200 \\ 
 Block in Corner & \cite{houlsby2011bayesian} & 50 & 25 & 150 & 200 \\ 
 Sonar & \cite{gorman1988analysis}& 300 & 50 & 120 & 80 \\
 Wine & \cite{aeberhard1994comparative} & 200 & 45 & 120 & 160 \\
 Breast Cancer & \cite{mangasarian1990cancer} & 200 & 50 & 150 & 300 \\
 Ionosphere & \cite{sigillito1989classification} & 200 & 50 & 120 & 300\\
 Fri & \cite{friedman2002stochastic} & 200 & 50 & 120 & 220\\
 Vehicle & \cite{siebert1987vehicle} & 200 & 50 & 300 & 800 \\
 Thyroid & \cite{misc_thyroid_disease_102}  & 50 & 40 & 115 & 100 \\
 Iris &  \cite{misc_iris_53} & 50 & 40 & 100 & 100 \\
 Seismic & \cite{misc_seismic-bumps_266}  & 50 & 40 & 150 & 200 \\
 \hline
\end{tabular}
\end{table*}

\newpage

\subsection{Active Learning Pseudocode}\label{sec: Pseudocode}

\begin{algorithm}[!ht]
\caption{Active Learning Pipeline}
\begin{algorithmic}[1]
\Require Acquisition function $A$, Dataset $D$, Oracle $O$
\For{No. of Runs}
    \State $U \gets$ Randomly select \textit{Size of training set} points from $D$
    \State $L \gets$ Random sample of each label in $U$ labeled by $O$
    \State Remove these samples from $U$
    \For{No. of Queries}
        \State Train GP model on $L$
        \State $P \gets$ Predictions of GP model on $U$
        \State $d \gets A(P)$ \Comment{Select data point to query}
        \State $l \gets O(d)$ \Comment{Get label from oracle}
        \State $L \gets L \cup (d, l)$
        \State $U \gets U \setminus d$
    \EndFor
    \State $T \gets$ Randomly select \textit{Size of test set} points from $D$
    \State Test accuracy of GP model on $T$
\EndFor
\end{algorithmic}
\end{algorithm}

\subsection{GP Models} \label{sec: gp models}
All GP models were implemented with the GPytorch library and utilised the built-in RBF kernel. For binary classification, a custom Laplace approximation for the posterior was implemented, following the approach in \cite{williams2006gaussian}. In this case, hyperparameters were calculated in advance using the built-in GPytorch functionality and then fixed for the duration of the experiments. For multiclass classification, GPytorch's built-in softmax classification function and variational inference were utilised. Hyperparameters were optimised alongside the mean function using the Adam optimiser (500 training iterations and learning rate of 0.05).

\subsection{Baseline Active Learning Strategies}\label{sec: baseline strategies}
This appendix details the baseline active learning strategies used within the experiments. The following strategies were employed:

\begin{enumerate}
    \item \textbf{Random}: Data points are selected randomly from the unlabelled pool. This serves as a simple baseline to compare against more sophisticated strategies.

    \item \textbf{BALD}: BALD selects points that maximise the mutual information between predictions and model parameters. The selected point \( x \) maximises:
    \[
    \text{BALD}(x) = H[y | x, \mathcal{D}] - \mathbb{E}_{p(\theta | \mathcal{D})} \left[ H[y | x, \theta] \right]
    \]
    where \( H \) denotes entropy, \( y \) is the predicted label, \( \theta \) represents model parameters, and \( \mathcal{D} \) is the dataset.

    \item \textbf{Maximum Entropy of the Latent Distribution}: This strategy selects points based on the maximum entropy of the latent distribution \( p(f | x) \). The entropy \( H \) is given by:
    \[
    \text{MaxEnt}(x) = H[p(f | x)]
    \]

    \item \textbf{Entropy of the Predicted Labels}: This method selects data points with the highest entropy in the predicted class probabilities. For a point \( x \), the entropy is:
    \[
    \text{Entropy}(x) = -\sum_{l} p(y = l | x) \log p(y = l | x)
    \]
    where \( p(y = l | x) \) is the predicted probability of label \( l \).

    \item \textbf{Least Confidence}: The least confidence strategy selects points for which the model has the lowest confidence in its most likely prediction. The selection criterion is:
    \[
    \text{LeastConf}(x) = 1 - \max_{l} p(y = l | x)
    \]

    \item \textbf{Margin}: The margin strategy selects points based on the smallest difference between the probabilities of the top two predicted labels. For a point \( x \), it is defined as:
    \[
    \text{Margin}(x) = p(y = l_1 | x) - p(y = l_2 | x)
    \]
    where \( l_1 \) and \( l_2 \) are the classes with the highest and second highest predicted probabilities, respectively.
\end{enumerate}

In the binary case, we refer to the Least Confidence strategy as ``Standard'' because it is equivalent to both the ``Margin'' and ``Entropy of the Predicted Labels'' approaches.

\end{document}